%%%%%%%%%%%%%%%%%%%%%%%%%%%%%%%%%%%%%%%%%%%%%%%%%%%%%%%%%%%%%%%%%%%%%%%%%%%%%%%%
\documentclass{article}
\usepackage{xcolor,fullpage,amsmath,amsthm,amsfonts,amssymb,braket,hyperref}
\usepackage{appendix}
% Theorem environments, etc
\newtheorem{Lemma}{Lemma}
\newtheorem{claim}{Claim}
\newtheorem{Theorem}{Theorem}
\newtheorem*{Theorem*}{Theorem}
\newtheorem{Corollary}{Corollary}
\newtheorem{Proposition}{Proposition}
\newtheorem{Definition}{Definition}
\newtheorem*{Problem*}{}
\newtheorem{Fact}{Fact}

%%%%%%%%%%%%%%%%%%%%%%%%%%%%%%%%%%%%%%%%%%%%%%%%%%%%%%%%%%%%%%%%%%%%%%%%%%%%%%%%

%%%%%%%%%%%%%%%%%%%%%%%%%%%%%%%%%%%%%%%%%%%%%%%%%%%%%%%%%%%%%%%%%%%%%%%%%%%%%%%%

% Useful macros
\newcommand{\NP}{\mathsf{NP}}

\newcommand{\defeq}{\stackrel{\text{def}}{=}}
\newcommand{\FF}{\mathbb{F}}

\newcommand{\MinDist}{\textsc{MinDist}}
\newcommand{\MultGapDist}{\textsc{MultGapDist}}
\newcommand{\AddGapDist}{\textsc{AddGapDist}}

\newcommand{\CC}{\mathbb{C}}
\newcommand{\calC}{\mathcal{C}}

\newcommand{\CSSMinDist}{\textsc{CSSMinDist}}
\newcommand{\MultGapCSSDist}{\textsc{MultGapCSSDist}}
\newcommand{\AddGapCSSDist}{\textsc{AddGapCSSDist}}

\newcommand{\GraphMinDist}{\textsc{GraphMinDist}}
\newcommand{\MultGapGraphDist}{\textsc{MultGapGraphDist}}
\newcommand{\AddGapGraphDist}{\textsc{AddGapGraphDist}}

\newcommand{\MinDistDualDist}{\textsc{MinDistDualDist}}

\newcommand{\AddGapHGPDist}{\textsc{AddGapHGPDist}}

%%%%%%%%%%%%%%%%%%%%%%%%%%%%%%%%%%%%%%%%%%%%%%%%%%%%%%%%%%%%%%%%%%%%%%%%%%%%%%%%
\begin{document}
\title{On the hardness of approximating minimum distances of quantum codes}
\date{September 25, 2025}
\author{
Elena Grigorescu\thanks{David R. Cheriton School of Computer Science, 
University of Waterloo, Canada. 
Supported in part by NSF CCF-2228814 while at Purdue University. \url{elena-g@uwaterloo.ca}} 
\and
Vatsal Jha\thanks{Department of Computer Science, 
Purdue University, IN. Supported in part by NSF CCF-2228814, NSF CCF-2330130, NSF CCF-2127806 and ONR Award N00014-24-1-2695. \url{jha36@purdue.edu}} 
\and
Eric Samperton\thanks{Departments of Mathematics and Computer Science, 
Purdue Quantum Science and Engineering Institute, 
Purdue University, IN. Supported in part by NSF CCF-2330130. \url{eric@purdue.edu}} 
}
\maketitle
%\date{}
\begin{abstract}
The problem of computing distances of error-correcting codes is fundamental in both the classical and quantum settings.  While hardness for the classical version of these problems has been known for some time (in both the exact and approximate settings), it was only recently that
Kapshikar and Kundu showed these problems are also hard in the quantum setting.  As our first main result, we reprove this using arguably simpler arguments based on hypergraph product codes.  In particular, we get a direct reduction to CSS codes, the most commonly used type of quantum code, from the minimum distance problem for classical linear codes.

Our second set of results considers the distance of a graph state, which is a key parameter for quantum codes obtained via the codeword stabilized formalism.  We show that it is NP-hard to compute/approximate the distance of a graph state when the adjacency matrix of the graph is the input.   In fact, we show this is true even if we only consider X-type errors of a graph state.  Our techniques moreover imply an interesting classical consequence: the hardness of computing or approximating the distance of classical codes with rate equal to 1/2.

One of the main motivations of the present work is a question raised by Kapshikar and Kundu concerning the NP-hardness of approximation when there is an additive error proportional to a quantum code's length.  We show that no such hardness can hold for hypergraph product codes.  These observations suggest the possibility of a new kind of square root barrier.
\end{abstract}

%%%%%%%%%%%%%%%%%%%%%%%%%%%%%%%%%%%%%%%%%%%%%%%%%%%%%%%%%%%%%%%%%%%%%%%%%%%%%%%%

\section{Introduction}
\label{sec:intro}

\subsection{Quantum and classical distance problems}
\label{ss:problems}
Scalable, fault-tolerant quantum computation is expected to require families of quantum error-correcting codes with larger and larger distance parameters,
and CSS codes---introduced in \cite{calderbankshor98} and \cite{Steane96}---provide a powerful and well-studied framework for building such families.

In both the classical and quantum settings, it is a problem of fundamental importance to understand how best to compute distances of codes.
Recently, Kapshikar and Kundu \cite{kapshikar2023hardness} showed that it is $\NP$-hard to compute the distance of CSS codes, both in the exact sense and in various approximate senses (with various notions of polynomial-time reduction).
Their proof relies heavily on the codeword-stabilized (CWS) framework, where a quantum code on $n$-qubits is defined using a graph $G$ on $n$-vertices and a binary classical code $C$ of length $n$. The goal of the present paper is to study the hardness of other related metrics for CWS codes, namely the minimum graph state distance associated with the graph $G$. To do this, we consider a problem in classical coding theory, which we refer to as the $\MinDistDualDist$ problem, and then prove its hardness. The latter problem seems to be of independent interest as it also provides the hardness of computing the distance of classical codes with rate equal to $1/2$. Along with it, we provide simpler proofs of the results in \cite{kapshikar2023hardness} and try to shed light on the hardness of approximating distances with an additive error that is proportional to a code's length.
In particular, the constructions used in the reduction of \cite{kapshikar2023hardness} hint at the possibility of a new kind of ``square-root barrier'' that we aim to better understand.

Before stating our results, let us first recall some of the basic notions of quantum and classical codes (if only to set notation), as well as formulate the precise distance problems that interest us.
Despite being examples of quantum codes, CSS codes are defined using certain pairs of \emph{classical} codes, so we review classical codes first.
(Perhaps the most important insight of the early efforts on quantum error correction was that one can reduce the problem of their construction to certain slightly unusual constructions with classical codes.)

A \emph{(classical, binary) code} $C$ of \emph{length} $n$ is any subset $C \subseteq \FF_2^{n}$, where $\FF_2=\{0,1\}$ is the binary field and $\FF_2^n$ the $n$-dimensional $\FF_2$-vector space consisting of all row vectors of length $n$.
If $C \le \FF_2^n$ is a linear subspace, then we call $C$ a \emph{linear} code.
All classical codes considered in this paper will be linear, and so we will often not use the word ``linear'' when we should.
The two most important parameters of a classical code are its \emph{dimension} $k \defeq \dim_{\FF_2} C$ and its \emph{distance}

\[d \defeq \min\{d_H(x,y) \mid x,y\in C, x\neq y\} = \min\{d_H(x,0) \mid x \in C,x\neq 0\},\]
where $d_H(x,y) \defeq |\{ i\in[n] \mid x_{i}\neq y_{i}\}|$
is the Hamming distance between vectors in $\FF_{2}^n$.
A classical linear code with such parameters is called an $[n,k,d]$ code.

There are two standard ways to present a classical linear code: with a \emph{generator matrix} $G$, or with a \emph{parity-check matrix} $H$.
The former is any binary matrix whose row space equals $C$,
while the latter is any matrix whose kernel equals $C$---or, more precisely (due to the difference between row and column vectors),
\[ C = \{ x \in \FF_2^n \mid Hx^T = 0\}.\]
For a given parity-check matrix $H$, we let $C(H)$ denote the code corresponding to $H$.
Parity-check matrices can be converted to generator matrices, and \emph{vice versa}, in polynomial time.
Moreover, when convenient, we may assume without loss of generality that the parity check matrix $H$ of a $[n,k,d]$ code is a $(n-k) \times n$ matrix (in particular, has full rank $n-k$). A natural way of specifying a parity-check matrix $H$ is via the systematic form where $H=[I_{n-k}:P_{(n-k)\times k}]$.\footnote{For matrices $A$ and $B$ with equal number of rows, [A:B] refers to their augmented matrix}
Given the parity-check matrix in systematic form, $G=[P^{T}_{k\times (n-k)}:I_{k}]$ will be a generator matrix for $C$.

The problem of calculating the distance of a classical code from its parity-check matrix has the following standard decision variant.
\begin{quote}
\label{MinDist}
\underline{Classical Minimum Distance Decision Problem $(\MinDist)$}

\textbf{Instance:} A binary matrix $H\in\FF_2^{(n-k)\times n}$ and a non-negative integer $t$.

\textbf{Output:} YES if $d\leq t$.  NO otherwise.
\end{quote}
There are also important approximation variants of $\MinDist$, either with a multiplicative error or an additive error.
The decision variants of these are most conveniently formulated as promise problems with a gap.
Let $\gamma\ge 1$ and $\tau>0$.
\begin{quote}
\label{MultGapDist}
\underline{Classical Minimum Distance Decision Problem with Multiplicative Gap $\gamma$}

\underline{($\MultGapDist_\gamma$)} 

\textbf{Instance:}  A binary matrix $H\in\FF_2^{n-k\times n}$ and a non-negative integer $t$.

\textbf{Promise:} Either $d\leq t$ or $d>\gamma t$.
    
\textbf{Output:} YES if $d\leq t$. NO if $d>\gamma t$.
\end{quote}
\begin{quote}
\label{AddGapDist}
\underline{Classical Minimum Distance Decision Problem with Additive Gap $\tau n$}

\underline{($\AddGapDist_\tau$)}

\textbf{Instance:}  A binary matrix $H\in\FF_2^{n-k\times n}$ and a non-negative integer $t$.

\textbf{Promise:} Either $d\leq t$ or $d>t+\tau n$.

\textbf{Output:} YES if $d\leq t$. NO if $d>t+\tau n$.
\end{quote}

All three of these problems have been studied extensively, but the first two have received the most attention.
The question of the complexity of $\MinDist$ was first raised in 1978 \cite{BerlekampIntractibility78}, and Vardy eventually showed that $\MinDist$ is $\NP$-hard via deterministic Karp reduction \cite{vardy1997intractability}.
Dumer, Miccancio, and Sudan later showed that $\MultGapDist_\gamma$ and $\AddGapDist_\tau$ are both $\NP$-hard under RUR reduction for all $\gamma \ge 1$ and some $\tau > 0$ \cite{dumer2003approx}.
In the meantime, there have been various improvements \cite{Khot05,ChenggapMDP2012,AustrinKhot,Micciancio2014,BhattiproluLee24,Bhattiprolu2025PCPfree}, and it is now known that $\MultGapDist_\gamma$ is $\NP$-hard under deterministic Karp reduction.
Remark 15 of \cite{Bhattiprolu2025PCPfree} indicates that there is now a deterministic proof of the relatively near codeword problem in a specific parameter regime, but this does not appear to give the necessary result to derandomize the RUR reduction of \cite{dumer2003approx} in the case of additive gaps (which goes through RNC).
Nevertheless, Xuandi Ren kindly explained to us that the methods of \cite{Bhattiprolu2025PCPfree} are in fact sufficient to show that $\AddGapDist_\tau$ is $\NP$-hard under deterministic Karp reduction for some $\tau>0$ \cite{Xuandi}.

We need one more notion from classical codes before we can properly introduce CSS codes.
Given a classical code $C$ of length $n$, the \emph{dual code} $C^\perp$ is another length $n$ code defined as follows:
\[C^\perp \defeq \{c'\in\FF_2^n| \langle c',c\rangle=0, \text{ for all } c\in C \}\]
where $\langle c',c\rangle$ is the usual $\FF_2$ dot product.
It is easy to see that a generator matrix for $C$ is a parity-check matrix for $C^{\perp}$.
In particular, if $H$ is any parity-check matrix of a code $C$ and $G$ is a generator matrix for $C$, then $HG^{T}=0$.

We are finally ready to briefly recall the description of quantum CSS codes.

To start, let us note that, in general, a {\em quantum error correcting code of length $n$} is any $\CC$-linear subspace $\calC$ of the Hilbert space of $n$ qubits $(\CC^2)^{\otimes n}$.\footnote{
Thus, naively, every quantum error correcting code is ``linear.''
However, the more important distinction in the quantum setting is between ``additive'' and ``non-additive'' codes.
Additive quantum codes are understood as the proper quantum analog of linear classical codes, and non-additive quantum codes as the analogs of non-linear classical codes.
We will not get into the details of this distinction here, except to note that all Pauli stabilizer codes are additive, and CSS codes are an important special case of Pauli stabilizer codes.
}
Similar to the classical setting, the \emph{minimum distance $d_Q$ of a quantum error-correcting code} is defined to be the minimum number of qubits where non-trivial errors must occur in order to effect a non-trivial logical error on the code-space.
We call a quantum code of length $n$, with $\dim_\CC \calC = K$ and distance $d$ a $((n,K,d))$ quantum code.
If $K=2^k$ happens to be a power of 2, then we call $k$ the number of \emph{logical qubits} in the code, and call the code a $[[n,k,d]]$ quantum code.

The seminal works \cite{calderbankshor98} and \cite{Steane96} showed how to build certain quantum error-correcting codes---now called \emph{CSS codes}--- using any pair of classical binary linear codes $(C_1,C_2)$ with $C_2^\perp \le C_1$.
If $C_1$ and $C_2$ have parameters $[n,k_1,d_1]$ and $[n,k_2,d_2]$, then the CSS code $CSS(C_1, C_2)$ has parameters $[[n, k_1+k_2-n, d_Q]]$ where
\[ d_Q \defeq \min\{wt_H(a):a\in (C_1\setminus C_2^{\perp})\cup (C_2\setminus C_1^\perp )\} \]
is the (quantum) distance of $CSS(C_1, C_2)$.
If the parity check matrices of $C_1$ and $C_2$ are $H_1$ and $H_2$, then the condition that $C_2^\perp \le C_1$ is equivalent to to the condition $H_1 H_2^T=O$, and we will write $CSS(H_1,H_2)$ to mean $CSS(C_1,C_2)$, and may refer to $H_1$ and $H_2$ as the \emph{quantum parity-check matrices} of the CSS code.

The quantum distance problems of primary interest in this work are the CSS analogs of the classical problems $\MinDist, \MultGapDist$ and $\AddGapDist$.
Generalizing the first two is completely straightforward.
As before, let $\gamma\ge1$.

\begin{quote}
\underline{CSS Minimum Distance Decision Problem ($\CSSMinDist$)}
\label{CSSMinDist}

\textbf{Instance:} $H_X\in\mathbb{F}_{2}^{n-k_{1}\times n}$ and $H_{Z}\in\mathbb{F}_{2}^{n-k_{2}\times n}$ with $H_{X}H_{Z}^{T}=0$, and a non-negative integer $t$.

\textbf{Output:} YES if $d_{Q}\leq t$. NO otherwise. 
\end{quote}

\begin{quote}
\label{MultGapCSSDist}
\underline{CSS Minimum Distance Decision Problem with Multiplicative Gap $\gamma$}

\underline{($\MultGapCSSDist_{\gamma}$)}

\textbf{Instance:} $H_X\in\mathbb{F}_{2}^{n-k_{1}\times n}$ and $H_{Z}\in\mathbb{F}_{2}^{n-k_{2}\times n}$ with $H_{X}H_{Z}^{T}=0$, and a non-negative integer $t$.

\textbf{Promise:} Either $d_Q\leq t $ or $d_Q>\gamma t$.

\textbf{Output:} YES if $d_Q\leq t $. NO if $d_Q>\gamma t$.
\end{quote}
There is an important subtlety in generalizing $\AddGapDist$ to the CSS setting.
We need \emph{two} parameters for the problem: $\tau>0$ as before and a new $\epsilon>0$.
\begin{quote}
\label{AddGapCSSDist}
\underline{CSS Minimum Distance Decision Problem with Additive Gap $\tau n^\epsilon$}

\underline{($\AddGapCSSDist_{\tau,\epsilon}$)}

\textbf{Instance:} $H_X\in\mathbb{F}_{2}^{n-k_{1}\times n}$ and $H_{Z}\in\mathbb{F}_{2}^{n-k_{2}\times n}$ with $H_{X}H_{Z}^{T}=0$, and a non-negative integer $t$.

\textbf{Promise:} Either $d_Q \leq t$ or $d_Q>t+\tau n^\epsilon$.

\textbf{Output:} YES if $d_Q \leq t$.  NO if $d_Q>t+\tau n^\epsilon$.
\end{quote}

As far as we are aware, the work of Kapshikar and Kundu \cite{kapshikar2023hardness} is the first to study these three problems explicitly, and they showed each is $\NP$-hard.
More precisely, they showed that $\CSSMinDist$ is $\NP$-hard under Karp reduction, $\MultGapCSSDist_\gamma$ is $\NP$-hard under polynomial-time RUR reduction for every $\gamma \ge 1$, and there exists a $\tau>0$ such that $\AddGapCSSDist_{\tau,\epsilon}$ is $\NP$-hard under polynomial time RUR reduction for every $\epsilon<\frac{1}{2}$.
In fact, thanks to \cite{ChenggapMDP2012,AustrinKhot,Bhattiprolu2025PCPfree}, their work is sufficient to establish that $\MultGapCSSDist_\gamma$ is $\NP$-hard under deterministic Karp reduction.
Similarly, \cite{Bhattiprolu2025PCPfree,Xuandi} imply that $\AddGapCSSDist_{\tau,\epsilon}$ (for some $\tau$ and all $\epsilon\le\frac{1}{2}$) is hard under deterministic Karp reduction.
We summarize these results:
\begin{Theorem}[\cite{kapshikar2023hardness}]
\label{thm:CSSdist}
Each of the following is $\NP$-hard under Karp reduction:
\begin{itemize}
\item $\emph{\CSSMinDist}$
\item $\emph{\MultGapCSSDist}_\gamma$ for all $\gamma \ge 1$
\item $\emph{\AddGapCSSDist}_{\tau,\epsilon}$, for each $0 < \epsilon \le \frac{1}{2}$ and some $\tau>0$ (depending on $\epsilon$)
\end{itemize}
\end{Theorem}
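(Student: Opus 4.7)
The plan is to reduce each of the classical distance problems---$\MinDist$, $\MultGapDist_\gamma$, and $\AddGapDist_\tau$, all $\NP$-hard under deterministic Karp reduction by the results cited just before the theorem---to its CSS analog via the hypergraph product construction. Given a full-row-rank input $H\in\FF_2^{m\times n}$ (which we may assume, after a polynomial-time Gaussian elimination, without altering $C(H)$), form the hypergraph product code $\mathrm{HGP}(H,H)$ on $N\defeq n^2+m^2$ qubits, with quantum parity check matrices
\[H_X = [\, H\otimes I_n : I_m\otimes H^T \,], \qquad H_Z = [\, I_n\otimes H : H^T\otimes I_m \,].\]
A direct computation using $(A\otimes B)(C\otimes D)=(AC)\otimes(BD)$ yields $H_XH_Z^T=2(H\otimes H^T)=0$ over $\FF_2$, so this is a valid CSS code, and the map $(H,t)\mapsto((H_X,H_Z),t)$ runs in polynomial time.

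The key lemma is that the quantum distance of $\mathrm{HGP}(H,H)$ equals the classical distance of $C(H)$. By the Tillich--Z\'emor distance formula for hypergraph product codes, $d_Q = \min(d,d^T)$ where $d=d(C(H))$ and $d^T=d(C(H^T))$, with terms corresponding to trivial codes omitted. Since $H$ has full row rank, $C(H^T)=\ker H^T=\{0\}$ is trivial, so $d_Q=d$; the logical dimension $k_Q=(n-m)^2$ is positive whenever $m<n$, so the CSS code genuinely encodes information. Thus the reduction is exactly distance-preserving.

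Given this, the three hardness claims follow. For $\CSSMinDist$ and $\MultGapCSSDist_\gamma$, the identity $d_Q=d$ (with $N$ polynomial in $n$) means the YES/NO answers and multiplicative promises transfer verbatim. For $\AddGapCSSDist_{\tau,\epsilon}$, the classical additive gap of $\tau n$ becomes an additive gap of $\tau n = \Theta(\tau\sqrt{N})$ in the CSS code (since $n\le\sqrt{N}\le\sqrt{2}\,n$); this yields hardness for $\epsilon=1/2$ and some $\tau'>0$. For any smaller $\epsilon \in (0, 1/2)$, the YES/NO instances produced by the reduction satisfy the weaker promise $d_Q>t+\tau' N^\epsilon$ automatically (for $N$ large enough, which we may arrange by padding), so hardness propagates to all $0<\epsilon\le 1/2$. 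This is precisely the ``square-root barrier'' mentioned in the abstract: the quadratic blowup in code length forces the additive hardness exponent down from $\epsilon=1$ classically to $\epsilon=1/2$ quantumly.

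The main technical obstacle is establishing the distance formula $d_Q(\mathrm{HGP}(H,H))=d(C(H))$ under the full-row-rank hypothesis. The upper bound comes from exhibiting explicit logical operators of the form $v\otimes e_j$, where $v$ is a minimum-weight codeword of $C(H)$ and $e_j$ is a standard basis vector, and verifying that they commute with all stabilizers but do not themselves lie in the stabilizer group. The lower bound follows from the observation that any nontrivial logical operator, projected via the tensor structure onto a suitable ``row'' or ``column'' of the $n\times n$ sub-register of qubits, must contain a nontrivial codeword of $C(H)$. Once this formula is in hand, the same reduction handles all three hardness claims uniformly, which is what allows us to avoid the separate ad hoc constructions used in \cite{kapshikar2023hardness}.
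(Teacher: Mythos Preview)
Your approach is correct and follows the same essential strategy as the paper: reduce the classical distance problems to their CSS analogs via the hypergraph product and the Tillich--Z\'emor distance formula (Equation~\eqref{eqn:TZ}). The instantiation differs in two minor ways. First, the paper pairs $H$ not with itself but with the parity-check matrix of the length-$n$ repetition code (parameters $[n,1,n]$), obtaining $d_Q=\min\{d,n\}=d$; your choice $\mathrm{HGP}(H,H)$ works just as well and gives $d_Q=\min\{d,d\}=d$ with $k_Q=k^2$ logical qubits. Second, for the additive-gap case the paper first establishes an intermediate lemma---that the \emph{classical} problem $\AddGapDist_{\tau,\epsilon}$ is $\NP$-hard for every $\epsilon\in(0,1]$, via tensoring $C$ with $\FF_2^{\lceil n^{1/\epsilon-1}\rceil}$---and then applies the HGP reduction with a repetition code of variable length $n^\alpha$; you instead go directly from the classical $\epsilon=1$ case to the CSS $\epsilon=1/2$ case and observe that smaller exponents follow automatically because $N^{\epsilon}\le N^{1/2}$ for $N\ge1$. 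Your route is slightly more streamlined (the parenthetical about padding is in fact unnecessary, since the same $\tau'$ works for all $\epsilon\le 1/2$); the paper's intermediate lemma is of some independent interest but not logically required for the theorem as stated.
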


Intriguingly, there are as yet no known hardness results for $\AddGapCSSDist_{\tau,\epsilon}$ when $\frac{1}{2}<\epsilon\le 1$.
We initiated the present work as a step towards attempting to resolve this matter, and include extensive discussion in Section \ref{sec:outlook}.
The reader is particularly encouraged to consider Proposition \ref{Prop:square-root barrier} in Subsection \ref{ss:control} after reading the next subsection.

\subsection{Our results}
\label{ss:results}
Our first main result, presented in Section \ref{sec:thm1}, is a new proof of Theorem \ref{thm:CSSdist}.
Where Kapshikar and Kundu employ the codeword stabilized (CWS) formalism introduced in \cite{Cross2008Codeword}, we employ the hypergraph product (HGP) construction of Tillich and Z\'emor \cite{Tillich13}.
We quote directly from Kapshikar and Kundu in order to identify the basic difficulty that must be overcome in order to reduce the classical $\MinDist$ problem  to the quantum $\CSSMinDist$ problem:
\begin{quote}
    Note that, due to the orthogonality condition on CSS codes, one can not use an
arbitrary pair $C_1, C_2$. If we want to reduce the classical
minimum distance problem, starting with $C_1$, we need to
find a code $C_2$, such that, $C_2$ satisfies the orthogonality
condition and has minimum distance not less than $C_1$.
One way to get around this is to use self-dual (or weakly
self-dual) classical codes in the CSS construction. But it is
not clear whether the hardness result for classical codes
still holds under the restriction that the code is self-dual
(or weakly self-dual) \cite{kapshikar2023hardness}.
\end{quote}
We now compare and contrast the two constructions.

On one hand, the great insight of \cite{kapshikar2023hardness} is that one can go around the above obstacle by using the CWS formalism---rather than CSS codes---to build a (non-CSS) Pauli stabilizer code $CWS(C,G)$ on $n$ qubits in a way that combines a classical linear code $C$ with a graph $G$, with no restrictions on $C$ and $G$ other than that the number of vertices in $G$ equal the length of $C$.
With some additional deterministic polynomial time overhead, this Pauli stabilizer code can then be converted to a CSS code with the same parameters \cite{Bravyifermion2010}.
The delicate part when using this construction to reduce $\MinDist$ to $\CSSMinDist$ is then the choice of the graph $G$.
For the CWS code based on $C$ and $G$, the quantum distance $d_Q$ is known to satisfy the inequality $d_Q \le \min\{d, d_G\}$ where $d$ is the classical distance of $C$ and $d_G$ is the \emph{minimum graph state distance}
\[d_G \defeq \min\{wt_{H}(x\lor z):A_{G}x^{T}=z^{T}\}\]
(here $A_G$ is the adjacency matrix of $G$).
The reduction from $\MinDist$ to $\CSSMinDist$ in \cite{kapshikar2023hardness} ultimately succeeds by combining an input code $C$ to $\MinDist$ with a graph $G$ from a very special family of $C_4$-free graphs introduced in \cite{Erdos1966Algebraic} that appear to be close to ``optimal" \cite{furedi2013history}.
With such a carefully chosen $G$, one gets that the quantum distance $d_Q$ of $CWS(C,G)$ equals the classical distance $d$ of $C$, which implies Theorem \ref{thm:CSSdist}.

On the other hand, rather than combine a graph with a classical code, the \emph{hypergraph product code construction} can combine \emph{any} two classical linear codes (of \emph{any} lengths) in a way that directly yields a CSS code \cite{Tillich13}.
Specifically, if $H_1$ and $H_2$ are the parity-check matrices for $C_1$ and $C_2$, then the \emph{hypergraph product code} $HGP(H_1,H_2)$ is the CSS code with quantum parity check matrices $H_1'$ and $H_2'$ defined as follows:
\[H_1' \defeq [H_1 \otimes I : I \otimes H_2^T], \qquad H_2' \defeq [I \otimes H_2 : H_1^T \otimes I].\]
It is straightforward to verify that $H_1'(H_2')^T=0$, and hence, this defines a valid CSS code.
Moreover, if $H_1$ and $H_2$ have full rank and $C_1$ and $C_2$ have parameters $[n_1,k_1,d_1]$ and $[n_2,k_2,d_2]$, then \cite{Tillich13} show that $HGP(H_1,H_2)$ is a code with parameters
\[ [[n_1n_2+(n_1-k_1)(n_2-k_2),k_1k_2, \min\{d_1,d_2\}]]. \]
Put succinctly, when given full rank classical parity-check matrices, we have
\begin{equation}
\label{eqn:TZ}
HGP([n_1,k_1,d_1],[n_2,k_2,d_2]) =[[n_1n_2+(n_1-k_1)(n_2-k_2),k_1k_2, \min\{d_1,d_2\}]]
\end{equation}
Our proof of Theorem \ref{thm:CSSdist} is then hardly delicate: given a classical code $C_1$ with parity check matrix $H_1$, we choose $H_2$ simply to be the parity-check matrix of a repetition code of an appropriate length.
The details are in Section \ref{sec:thm1}.
The basic idea is similar in spirit to the tensor-based techniques of \cite{AustrinKhot} used to establish a deterministic reduction for the classical $\MultGapDist$ problem.

Before moving on to our other results, we note that the HGP construction is an important technique, as it was the first to break the square root barrier of distance at a constant rate.
This construction was thus an important precursor to a long line of works that culminated in the proof of existence of ``good'' quantum LDPC codes \cite{Bravyi2014homological}, \cite{Kaufman2021cosystolic} and \cite{Nicolas2021Union}, and so it is desirable to have direct methods for exhibiting the intrinsic hardness of problems involving HGP codes.

Our second main set of results are related to graph state distance itself, which, as we already saw, is an important primitive in the CWS code construction.
More generally, graph states have applications in quantum secret-sharing, quantum metrology and even quantum error-correction.
They are extensively studied in \cite{Damian2008graph}, \cite{Hein2004Multiparty}, \cite{Cabello2011Optimalgraphstate}, for example.
While their precise definition is not so important for the present work, we note here that a graph state is a kind of Pauli stabilizer state, and thus can be understood as a quantum code that encodes $0$ logical qubits (in other words, a single fault-tolerant quantum state).
For interested readers, we include some details in Section \ref{Preliminaries}.

A trivial upper bound on the minimum distance of the graph state corresponding to a graph $G=(V,E)$ is $\delta_{G}\defeq\underset{v\in V}{\min}\; deg(v)+1$.
However, it can happen that $d_{G}\ll\delta_{G}$.
Indeed, consider the graph $G=(V,E)$ one gets by removing a single edge---say, $\{v_1,v_n\}$---from the complete graph $K_n$ on $n$ vertices.
Then $\delta_G = n-1$, but $d_G=2$.

With these observations in hand, it is natural to consider the following distance problems for graph states.

\begin{quote}
\underline{Graph State Minimum Distance Decision Problem ($\GraphMinDist$)}

\textbf{Instance:} The adjacency matrix $A_{G}$ of a simple graph and a non-negative integer $t$.

\textbf{Output:} YES if $d_{G}\leq t$.  NO otherwise.
\end{quote}
\begin{quote}
\underline{Graph State Minimum Distance Decision Problem with Multiplicative Gap $\gamma$}

\underline{($\MultGapGraphDist_\gamma$)}

\textbf{Instance:} The adjacency matrix $A_{G}$ of a simple graph and a non-negative integer $t$.

\textbf{Promise:} Either $d_{G}\leq t$ or $d_{G}>\gamma t$.

\textbf{Output:} YES if $d_{G}\leq t$ and NO if $d_{G}>\gamma t$.
\end{quote}
\begin{quote}
\underline{Graph State Minimum Distance Decision Problem with Additive Gap $\tau n^\epsilon$}

\underline{($\AddGapGraphDist_{\tau,\epsilon}$)}

\textbf{Instance:} The adjacency matrix $A_{G}$ of a simple graph and a non-negative integer $t$.

\textbf{Promise:} Either $d_{G}\leq t$ or $d_{G}>t+\tau n^\epsilon$.   

\textbf{Output:} YES if $d_{G}\leq t$ and NO if $d_{G}>t+\tau n^\epsilon$.   
\end{quote}
\begin{Theorem}
\label{Graph-State-Distance}
Each of the following is $\NP$-hard under Karp reduction:
\begin{itemize}
    \item $\emph{\GraphMinDist}$
    \item $\emph{\MultGapGraphDist}_\gamma$ for all $\gamma \ge 1$
    \item $\emph{\AddGapGraphDist}_{\tau,\epsilon}$ for each $0 < \epsilon \le \frac{1}{3}$ and some $\tau>0$ (depending on $\epsilon$)
\end{itemize}
\end{Theorem}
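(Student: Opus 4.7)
The plan is to reduce the classical minimum distance problems $\MinDist$, $\MultGapDist_\gamma$, and $\AddGapDist_\tau$ directly to their graph-state analogues by a deterministic, polynomial-time construction that builds a single graph $G$ from a parity-check matrix $H$ and reads off $d_G$ as essentially $d(C(H))$. This mirrors the strategy of Section \ref{sec:thm1}, but the target is a graph state rather than a general CSS code, so the construction must respect the extra symmetry and zero-diagonal constraints of an adjacency matrix.

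Given a full-rank parity-check matrix $H\in\FF_2^{m\times n}$, I first preprocess it to a matrix $H'$ of size $m'\times n'$ with $C(H')\cong C(H)$ (hence the same distance $d$) in which every row and every column has weight at least some threshold $T=\Theta(d)$. The two natural preprocessing moves are tensoring, $H'=H\otimes S$ for a small invertible symmetric zero-diagonal $S$ over $\FF_2$ (for example $S=J_L+I_L$, the adjacency of $K_L$ for even $L$, which is full rank), and adding redundant rows drawn from $C(H)^\perp$ in order to cover any light columns. With $H'$ in hand I form the bipartite graph $G$ on $N:=n'+m'$ vertices whose adjacency is
\[
A_G \;=\; \begin{pmatrix} 0 & (H')^T \\ H' & 0 \end{pmatrix}.
\]
This is automatically a legitimate adjacency matrix (symmetric, zero diagonal, $\{0,1\}$-valued). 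The key claim is $d_G=d$. The upper bound is immediate from the ``$X$-type'' family: for $y\in C(H')$ of weight $d$, setting $x=(y,0)$ gives $A_G x=0$, so $|x\cup A_G x|=|x|=d$. For the lower bound I split on whether $x\in\ker A_G$. When $x\in\ker A_G$, a direct computation using the full-rank of $H'$ yields $\ker A_G = C(H')\oplus 0$, whence $|x|\ge d$. When $x=(y,w)\notin\ker A_G$, I use the inequality $|x\cup A_G x|\ge\max(|y|+|H'y|,\ |w|+|(H')^T w|)$ and lower-bound each side by $d$ using the heavy row/column weights of $H'$: for example, any $y\neq 0$ with $H'y\neq 0$ must have $|y|+|H'y|\ge 1+\min_j|H'_{:,j}|\ge d$, and similarly on the $w$-side.

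\textbf{Main obstacle.} The hard step is the lower bound when both $y$ and $w$ are nonzero, and more subtly, when $y$ corresponds to a matrix with ``aligned'' rows that allow cancellations in $H'y$. With the naive $H'=H$, light columns immediately force $d_G\le 2$; with $H'=H\otimes S$ and $S$ the complete graph $K_L$, two identical nonzero rows of the matricised $Y$ can still produce a mixed stabilizer of weight only $2(1+c)$, where $c$ is a column weight of $H$, independently of the choice of $L$. Overcoming these cancellation phenomena is what forces the preprocessing to simultaneously control row weights, column weights, and the interaction with $S$ (possibly by iterating the tensor blow-up, or by pairing $H'$ with a ``weight-distinguishing'' system of redundant checks), and it is also what drives the quantitative bounds in the additive-gap version.

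\textbf{Gap transfer.} Once $d_G=d$ is secured, $\MultGapGraphDist_\gamma$ follows from $\MultGapDist_\gamma$ for every $\gamma\ge 1$ because the reduction preserves the distance exactly. For the additive version the graph has $N=n'+m'$ vertices, and in order to carry out the full preprocessing (heavy rows, heavy columns, and no cancellations) my current estimate is $N=\Theta(n^3)$. Writing the hard classical additive gap $\tau n$ as $\tau N^\epsilon$ then yields $\epsilon=1/3$, matching the bound stated in the theorem; a more efficient (quadratic) preprocessing would improve this to $\epsilon\le 1/2$, matching Theorem \ref{thm:CSSdist} for CSS codes.
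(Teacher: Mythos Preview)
Your bipartite adjacency $A_G=\begin{pmatrix}0&(H')^T\\H'&0\end{pmatrix}$ is the right kind of object, and the identification $\ker A_G=C(H')\oplus 0$ is correct, but the lower bound for $x\notin\ker A_G$ does not go through. The assertion ``any $y\neq0$ with $H'y\neq0$ has $|y|+|H'y|\ge 1+\min_j|H'_{:,j}|$'' is simply false once $|y|\ge2$: taking $y=e_i+e_j$ gives $H'y=\mathrm{col}_i+\mathrm{col}_j$, which can have tiny weight even when every individual column is heavy. Your own $2(1+c)$ example is precisely this phenomenon (it is the $w=0$ sub-case of ``$x\notin\ker A_G$'', not a genuinely two-sided one): with $H'=H\otimes S$ and $S$ the $K_L$ adjacency, two equal rows in $Y$ give $|y|+|H'y|=2(|v|+|Hv|)$ independently of $L$. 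The structural reason tensoring cannot rescue you is that $C(H\otimes S)^\perp=C(H)^\perp\otimes\FF_2^L$ whenever $S$ is invertible, so $d(C(H')^\perp)=d(C(H)^\perp)$ is untouched---and it is exactly a dual-distance quantity that governs $d_G$ in this construction. No amount of row/column-weight preprocessing of $H$ alone manufactures control over $d(C^\perp)$.

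The paper resolves this by explicitly factoring through the dual. Writing $H=[I:P]$ in systematic form and taking the \emph{smaller} bipartite graph $A_P=\begin{pmatrix}O&P\\P^T&O\end{pmatrix}$ on $n$ vertices, the two blocks of equations $[I:P](\,\cdot\,)=0$ and $[P^T:I](\,\cdot\,)=0$ involve \emph{disjoint} subsets of the $2n$ variables $(a,b)$, which yields the clean identity $d_G=\min\{d(C),d(C^\perp)\}$ with no preprocessing whatsoever. The hard step is then pushed into Lemma~\ref{lem:dualdist}: one tensors $C^\perp$ with an auxiliary code $C'$ of length $N$ satisfying $\min\{d(C'),d({C'}^\perp)\}\ge\sqrt N$ (supplied by~\cite{Hai2023infinte} or the appendix construction), and a short calculation gives $\min\{d(\tilde C),d(\tilde C^\perp)\}=d(C)$ once $N\ge n^2$. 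It is this blow-up to length $nN=\Theta(n^3)$---not an iterated tensor preprocessing of $H$---that produces the $\epsilon=1/3$ in the additive-gap statement. The missing ingredient in your plan is thus exactly this auxiliary code with simultaneously large distance and dual distance.
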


Interestingly, even computing the promise version of $\GraphMinDist$ where $z=0$ is $\mathsf{NP}$-hard.
More precisely, define the \emph{graph state $X$-distance $d_{X,G}$} by
\[d_{X,G} \defeq \min\{wt_{H}(x): x\in \FF_{2}^{n}, A_{G}x^{T}=0_{n\times 1}\}.\]
We can then define decision problems $\GraphMinDist^X$, $\MultGapGraphDist^X_\gamma$ and $\AddGapGraphDist^X_{\tau,\epsilon}$ exactly as before, but with $d_G$ replaced by $d_{X,G}$.

\begin{Theorem}\label{X-Graphstatedistance}
Each of the following is $\NP$-hard under Karp reduction:
\begin{itemize}
    \item $\emph{\GraphMinDist}^X$
    \item $\emph{\MultGapGraphDist}_\gamma^X$ for all $\gamma \ge 1$
    \item $\emph{\AddGapGraphDist}_{\tau,\epsilon}^X$ for each $0 < \epsilon \le \frac{1}{3}$ and some $\tau>0$ (depending on $\epsilon$)
\end{itemize}
\end{Theorem}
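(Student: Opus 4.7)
My plan is to give direct Karp reductions from each classical distance problem ($\MinDist$, $\MultGapDist_\gamma$, $\AddGapDist_\tau$) to the corresponding X-distance graph state analogue, by incorporating the classical parity-check matrix into the adjacency matrix of a bipartite graph (morally, the Tanner graph of the code).

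Starting from an instance $(H,t)$ with $H\in\FF_2^{(n-k)\times n}$, I would first replace $H$ by its row-reduced form so that $H$ has full row rank $n-k$; this leaves $C(H)$, and hence its distance, unchanged. Then I would construct the bipartite graph $G$ on $N = n + (n-k)$ vertices with adjacency matrix
\[ A_G = \begin{pmatrix} 0 & H \\ H^T & 0 \end{pmatrix}. \]
This is a legitimate simple graph adjacency matrix: it is symmetric, $\{0,1\}$-valued, and has zero diagonal. A vector $(u,v) \in \FF_2^{n-k} \times \FF_2^n$ satisfies $A_G(u,v)^T = 0$ iff $Hv = 0$ and $H^T u = 0$; the full-rank assumption forces $u = 0$, so $\ker A_G = \{(0,v) : v \in C(H)\}$. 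Consequently $d_{X,G} = d(C(H))$, and the threshold $t$ passes through verbatim.

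Hardness of $\GraphMinDist^X$ and $\MultGapGraphDist_\gamma^X$ would then follow immediately from the classical hardness results (including those cited after Theorem \ref{thm:CSSdist}), since the equality $d_{X,G} = d(C(H))$ preserves both the threshold and any multiplicative gap exactly. For the additive-gap version, the classical additive gap $\tau n$ remains a gap of $\tau n$ in the graph distance, and since $N \le 2n$, one has $\tau n \ge (\tau/2^\epsilon)\,N^\epsilon$ for every $\epsilon \le 1$. So the hardness of $\AddGapGraphDist_{\tau,\epsilon}^X$ should transfer for all $\epsilon \le 1$, which is strictly stronger than the claimed range $\epsilon \le \tfrac{1}{3}$.

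The only step that is not purely mechanical is the full-rank reduction on $H$; without it, $\ker A_G$ would also contain vectors $(u,0)$ with $u \in C(H^T)$, giving $d_{X,G} = \min\{d(C(H)), d(C(H^T))\}$, and the left-kernel code would need to be controlled separately. Beyond this observation I do not anticipate any substantial obstacle, and the reduction is markedly simpler than the CWS-based approach needed for $\GraphMinDist$ itself, since restricting to the $X$-type kernel completely decouples the two sides of the bipartition.
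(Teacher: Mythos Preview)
Your proposal is correct and takes a different, more elementary route than the paper. The paper first doubles $C$ to the self-orthogonal code $\pi(C)=\{(c,c):c\in C\}$ with parameters $[2n,k,2d]$, writes its systematic parity-check and generator matrices as $[I:P]$ and $[P^{T}:I]$, and forms the symmetric matrix $A_P=\begin{bmatrix} I & P\\ P^{T} & I\end{bmatrix}$; self-orthogonality is the device that forces $\ker A_P=\pi(C)\cap\pi(C)^{\perp}=\pi(C)$, whence $d_{X,G}=2d$. Your bipartite Tanner-graph construction $A_G=\begin{bmatrix}0 & H\\ H^{T} & 0\end{bmatrix}$ dispenses with the doubling and the self-orthogonality trick altogether: the full-row-rank hypothesis alone gives $H^{T}u=0\Rightarrow u=0$, so $\ker A_G$ is an isomorphic copy of $C(H)$ and $d_{X,G}=d$ on the nose. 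Two incidental advantages of your version are that the resulting graph is genuinely simple (the paper's $A_P$ carries identity blocks on the diagonal, hence self-loops, as the paper itself notes), and that the threshold $t$ transfers without the factor of~$2$. Your remark about the additive gap is also correct: both constructions embed a length-$n$ instance into a graph on $\Theta(n)$ vertices, so both in fact yield hardness of $\AddGapGraphDist^{X}_{\tau,\epsilon}$ for every $\epsilon\le 1$; the stated range $\epsilon\le\tfrac{1}{3}$ in the theorem appears to be inherited from Theorem~\ref{Graph-State-Distance} rather than reflecting a limitation of either proof.
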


Unlike our proof of Theorem \ref{thm:CSSdist}, our proofs of Theorems \ref{Graph-State-Distance} and \ref{X-Graphstatedistance} do not use the hypergraph product construction.
Instead, we use the tensor product of codes, together with the fact that simultaneously minimizing the distance of a (classical) code and its dual---a problem we call $\MinDistDualDist$ and define formally below---is hard.
To prove this latter fact, we use an interesting recent construction of codes whose distance and dual distance are both ``large'' \cite{Hai2023infinte}.
See Lemmas \ref{lem:dualdist} and \ref{lem:dual} for details.
The hardness of $\MinDistDualDist$ furthermore implies another result in classical coding theory:  the problem of bounding the distance of binary codes with rate equal to 1/2 is $\mathsf{NP}$-Complete.
To the best of our knowledge this does not appear in any earlier literature.

\subsection{Organization}
Section \ref{Preliminaries} briefly reviews the the Pauli stabilizer formalism and more carefully defines graph states.
This section is not really needed for our proofs, but is included for interested readers.
We prove Theorem \ref{thm:CSSdist} in Section \ref{sec:thm1}.
Theorems \ref{Graph-State-Distance} and \ref{X-Graphstatedistance} are proved in Subsections \ref{ss:GSD} and \ref{ss:XGSD}, respectively.
The final Section \ref{sec:outlook} includes extensive discussion, especially concerning the matter of hardness for $\AddGapCSSDist_{\tau,\epsilon}$ when $\epsilon>\frac{1}{2}$.

%%%%%%%%%%%%%%%%%%%%%%%%%%%%%%%%%%%%%%%%%%%%%%%%%%%%%%%%%%%%%%%%%%%%%%%%%%%%%%%%
\section{Preliminaries}
\label{Preliminaries}
As we have already discussed the formalism for classical error-correcting codes in the introduction, we  directly discuss the formalism for quantum error-correcting codes along with stabilizer formalism for quantum codes introduced in \cite{Gottesman1997stabilizer} and \cite{calderbank1998gf4}.

The state space of a single qubit is the two-dimensional Hilbert space, $\mathbb{C}^{2}$ and the $n$-qubit state space is the $n$-fold tensor product $(\mathbb{C}^{2})^{\otimes n}$.
As the state space of a single qubit is a 2-dimensional Hilbert space, hence any state $\ket{\psi}$ of a single qubit can be denoted by:
$\ket{\psi}=\alpha\ket{0}+\beta\ket{1},$ where $|\alpha|^{2}+|\beta|^{2}=1$ and $\ket{0}=\begin{bmatrix}
    1\\
    0
\end{bmatrix}$ and $\ket{1}=\begin{bmatrix}
    0\\
    1
\end{bmatrix}$.

As the state space of n-qubits is the $n$-fold tensor product of $\mathbb{C}^{2}$ i.e. $(\mathbb{C}^{2})^{\otimes n}$.
Hence, a  state $\ket{\psi}$ of n$-qubits$  can be represented by:
\[\ket{\psi}=\underset{x\in\FF_{2}^{n}}{\sum}\lambda_{x}\ket{x},\]
such that $\underset{{x\in\FF_{2}^{n}}}{\sum}|\lambda_{x}|^{2}=1$.

\begin{Definition}\label{Pauli-Group}
    The Pauli Group on n-qubits is defined as:
    \[\mathcal{P}_{n}=\{i^{\lambda}M_{1}\otimes M_{2}\otimes...\otimes M_{n}:\lambda\in\{0,1,2,3\},\; M_{i}\in\{I,X,Y,Z\} \text{ for all $i\in[n]$}\},\]
    where
    $I=\begin{bmatrix}
        1 &0 \\
        0& 1
    \end{bmatrix},$
    $X=\begin{bmatrix}
        0 &&1 \\
        1 && 0\\
    \end{bmatrix},$
    $Y=\begin{bmatrix}
        0 && -i \\
        i && 0\\
    \end{bmatrix}$
    and 
    $Z=\begin{bmatrix}
        1 &&0 \\
        0 && -1\\
    \end{bmatrix}$.
\end{Definition}
As $Y=iXZ$, hence, for every $g\in\mathcal{P}_{n}$ there exist $a,b\in\mathbb{F}_{2}^{n}$ such that $g=i^{\lambda} \bigotimes_{i=1}^{n}X^{a_{i}}Z^{b_{i}}.$ A compact way to write this is $g=i^{\lambda}X(a)Z(b)$.
\begin{Definition}
 A stabilizer subgroup, $\mathcal{S}$, of $\mathcal{P}_{n}$ is an abelian subgroup not containing $-I$.      
\end{Definition}

\begin{Definition}
    For a stabilizer subgroup $S$ of the Pauli group $\mathcal{P}_{n}$, the stabilizer code $C(\mathcal{S})$ associated with it is the joint $+1$-eigenspace of the operators in $S$:
    \[C(\mathcal{S})=\{\ket{\psi}:\; g\ket{\psi}=\ket{\psi} \text{ for all $g\in S$}\}.\]

    Moreover, a stabilizer code $\mathcal{S}$ is said to encode $k$-logical qubits if $dim(C(\mathcal{S}))=2^{k}$
\end{Definition}
\begin{Fact}
    For a stabilizer group of order $2^{k}$, the corresponding stabilizer code has dimension $2^{n-k}$.
\end{Fact}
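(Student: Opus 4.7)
The plan is a direct computation: I will write the orthogonal projector onto the joint $+1$-eigenspace of $\mathcal{S}$ as an explicit group average and then compute its trace, with the final answer falling out from the fact that every nontrivial tensor product of Pauli matrices is traceless.

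First I would verify the structural prerequisite that every $g\in\mathcal{S}$ is Hermitian with $g^{2}=I$. Writing $g=i^{\lambda}M$ with $M$ a tensor product of $\{I,X,Y,Z\}$, one has $g^{2}=(-1)^{\lambda}I$ because $M^{2}=I$ for any such $M$. If $\lambda$ were odd, then $-I\in\mathcal{S}$, contradicting the definition of a stabilizer subgroup. Hence each $g$ has order $1$ or $2$ and eigenvalues in $\{\pm 1\}$, so $\mathcal{S}$ is an elementary abelian $2$-group, i.e.\ a $k$-dimensional $\FF_{2}$-vector space.

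Next, I would pick any $\FF_{2}$-basis $g_{1},\dots,g_{k}$ of $\mathcal{S}$ and form
\[P=\prod_{i=1}^{k}\frac{I+g_{i}}{2}.\]
Since the $g_{i}$ pairwise commute and each factor is the projector onto the $+1$-eigenspace of $g_{i}$, $P$ is the orthogonal projector onto $C(\mathcal{S})$. Expanding the product gives one contribution for each subset $T\subseteq[k]$, and $\FF_{2}$-independence of the generators guarantees that the $2^{k}$ products $\prod_{i\in T}g_{i}$ enumerate $\mathcal{S}$ without repetition, so
\[P=\frac{1}{2^{k}}\sum_{s\in\mathcal{S}}s.\]

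Finally, I would compute $\dim C(\mathcal{S})=\mathrm{tr}(P)$ term by term. The only element of $\mathcal{S}$ with nonzero trace is the identity, since any nontrivial tensor product of single-qubit Pauli matrices has at least one traceless factor among $X$, $Y$, $Z$, and the overall $\pm$ sign cannot rescue us. Therefore $\mathrm{tr}(P)=2^{n}/2^{k}=2^{n-k}$, as claimed. The one step with any real content is the first—pinning down how the hypothesis $-I\notin\mathcal{S}$ rules out phases of $\pm i$ and forces $\mathcal{S}$ to be elementary abelian—after which the remaining argument is purely formal.
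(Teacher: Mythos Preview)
Your argument is correct and is precisely the standard projector--trace proof of this fact. The paper, however, does not prove this statement at all: it is recorded as a background Fact from the stabilizer formalism (with references to \cite{Gottesman1997stabilizer,calderbank1998gf4}) and left unproved. So there is no paper-side argument to compare against; your proposal simply supplies the routine verification that the paper takes for granted.
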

\begin{Fact}
    For any operator $A\in M_{n\times n}(\mathbb{C})$ acting on $n-$qubits, we have:
    $$A=\sum_{a,b\in \FF_{2}^{n}}\lambda_{a,b}X(a)Z(b),$$
    where $\lambda_{a,b}\in \mathbb{C}.$ Furthermore, the above representation of $A$ as a sum of $X(a)Z(b)$ is unique.
\end{Fact}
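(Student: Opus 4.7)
The plan is to prove the fact by a dimension count combined with orthogonality under the Hilbert--Schmidt inner product. Interpreting the statement in its intended form (operators on $n$ qubits are $2^n \times 2^n$ matrices), the ambient space has complex dimension $4^n$, while the indexed family $\{X(a)Z(b) : a,b \in \FF_2^n\}$ also has exactly $4^n$ elements. So it suffices to establish linear independence; the existence and uniqueness of the expansion then follow at once from comparing dimensions.

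First I would compute $\mathrm{tr}(X(a)Z(b))$. Because both $X(a)$ and $Z(b)$ are tensor products over the $n$ qubits, the trace factorizes, and one only needs the single-qubit values $\mathrm{tr}(I) = 2$ together with $\mathrm{tr}(X) = \mathrm{tr}(Z) = \mathrm{tr}(XZ) = 0$. Combining these yields $\mathrm{tr}(X(a)Z(b)) = 2^n$ if $(a,b) = (0,0)$ and $0$ otherwise. Next I would evaluate the Hilbert--Schmidt inner product $\langle X(a)Z(b),\, X(a')Z(b')\rangle = \mathrm{tr}\bigl((X(a)Z(b))^\dagger X(a')Z(b')\bigr)$. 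Since $X$ and $Z$ are Hermitian involutions, $(X(a)Z(b))^\dagger = Z(b)X(a)$, and pushing $X(a+a')$ past $Z(b)$ qubit by qubit using $ZX = -XZ$ reduces this product to $\pm\, X(a+a')Z(b+b')$. By the preceding trace calculation, the inner product vanishes unless $a = a'$ and $b = b'$, in which case it equals $\pm 2^n \ne 0$.

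Hence the $4^n$ operators $\{X(a)Z(b)\}$ are pairwise orthogonal under the Hilbert--Schmidt inner product and in particular linearly independent, so they form a basis of the space of operators on $n$ qubits; this gives both the existence of the expansion $A = \sum_{a,b} \lambda_{a,b} X(a)Z(b)$ and its uniqueness. The only mild book-keeping step is tracking the $\pm$ sign picked up while commuting Paulis past one another, but since this sign never affects whether the inner product vanishes, it does not obstruct the conclusion. I do not expect any serious obstacle; the result is essentially a consequence of the standard orthogonality relations for the Pauli basis.
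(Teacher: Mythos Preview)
Your argument is correct: showing that the $4^n$ operators $X(a)Z(b)$ are pairwise orthogonal under the Hilbert--Schmidt inner product is the standard way to establish that they form a basis of the $4^n$-dimensional space of operators on $n$ qubits, and you carry out the trace computations cleanly. One tiny refinement: in the diagonal case $a=a'$, $b=b'$ the sign is always $+$, since $(X(a)Z(b))^\dagger X(a)Z(b)=Z(b)X(a)X(a)Z(b)=I$; but as you note, only nonvanishing matters.

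There is nothing to compare against in the paper: this statement is recorded there as a \emph{Fact} without proof, so your write-up supplies what the paper omits.
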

\begin{Definition}
    The weight of an operator, $E\in M(\mathbb{C}_{n\times n}),$ with representation:
    $$E=\sum_{a,b\in \FF_{2}^{n}}\lambda_{a,b}X(a)Z(b),$$
    is defined as:
    $$wt(E)\defeq\max\{wt_{H}(a\lor b):\lambda_{a,b}\neq 0\}.$$
\end{Definition}
We can now define the distance of a quantum code.
\begin{Definition}
    For a quantum code on $n$ qubits, it is said to have a distance $d$, if it can correct all errors $E$ of weight $\leq \lfloor
    \frac{d-1}{2}\rfloor$.
\end{Definition}
We now define the graph state, which we touched on in the introduction.
\begin{Definition}
    For a graph $G$ on $n$-vertices, specified by an adjacency matrix $A_{G}$, the graph state is defined as the stabilizer code corresponding to the stabilizer group:
    \[\langle X(e_{i})Z(u_{i}):i\in[n]\rangle.\]
    where $e_{i}$ is the $i^\text{th}$ standard basis vector and $u_{i}$ is the $i^\text{th}$ row of $A_{G}$.
\end{Definition}

%%%%%%%%%%%%%%%%%%%%%%%%%%%%%%%%%%%%%%%%%%%%%%%%%%%%%%%%%%%%%%%%%%%%%%%%%%%%%%%%
%\section{Proofs}
\section{Proof of Theorem \ref{thm:CSSdist}}
\label{sec:thm1}
Theorem \ref{thm:CSSdist} says that each of the three different problems \[\CSSMinDist, \qquad \MultGapCSSDist \qquad \text{and} \qquad \AddGapCSSDist\]
is $\NP$-hard, and so we must provide three  reductions.
Each will start from the classical analogs $\MinDist$, $\MultGapDist$ and $\AddGapDist$, respectively.

To see that $\textsc{CSSMinDist}$ is $\NP$-hard, consider an input instance $(H_1,t)$ of the $\textsc{MinDist}$ problem, where $H_1$ is the parity-check matrix of a classical $[n,k,d]$ code $C$.
We may assume $H_1$ is full rank (if it is not, then we may, in polynomial time, replace it with a new, smaller parity-check matrix that is).
Now let $H_2$ be the (full rank) parity-check matrix for the classical repetition code of length $n$, which has parameters $[n,1,n]$.
We reduce to the hypergraph product code $HGP(H_1,H_2)$.
By Equation \ref{eqn:TZ}, this hypergraph product code has parameters $[[n^2+(n-k)(n-1),k,d]]$.
This implies the hardness of $\CSSMinDist$.

The reduction just shown builds a quantum code whose minimum distance equals the distance of the original classical code.
Hence, even finding a constant multiplicative approximation to $\CSSMinDist$ is $\NP$-hard.
That is, $\MultGapCSSDist$ is $\NP$-hard.

Finally, to prove the hardness of $\AddGapCSSDist_{\tau,\epsilon}$, we reduce from 
$\AddGapDist_\tau$, which was shown to be $\NP$-hard under RUR reduction for some $\tau$ by \cite{dumer2003approx}, although this argument can now be derandomized \cite{Bhattiprolu2025PCPfree,Xuandi}.

To make things clearer, we define a variant of the classical $\AddGapDist_\tau$ problem with an $\epsilon$ parameter.
\begin{quote}
\label{epsilonadditivegap}
\underline{Classical Minimum Distance Decision Problem with Additive Gap $\tau n^{\epsilon}$}

\underline{($\AddGapDist_{\tau, \epsilon}$)}

\textbf{Instance :} Binary matrix $\textbf{$H$}\in\mathbb{F}_{2}^{n-k\times n}$ and a positive integer $t$

\textbf{Promise:} Either $d\leq t$ or $d>t+\tau n^{\epsilon}$.

\textbf{Output:} YES if $d\leq t$ and NO if $d>t+\tau n^{\epsilon}$
\end{quote}

\begin{Lemma}
For every $\epsilon\in(0,1]$, there exists $\tau\in(0,1)$ for which $\emph{\AddGapDist}_{\tau, \epsilon}$ is $\mathsf{NP}$-hard.
\end{Lemma}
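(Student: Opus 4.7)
The plan is to reduce from the case $\epsilon=1$, which is exactly $\AddGapDist_\tau$ and is known to be $\mathsf{NP}$-hard (under deterministic Karp reduction) by the discussion immediately preceding the lemma. For $\epsilon=1$ there is nothing to prove, so I would focus on $\epsilon\in(0,1)$. The idea is a standard zero-padding: blow up the length of the code to some polynomially larger $N$ while preserving its minimum distance, so that the original additive gap of $\tau n$ at length $n$ automatically becomes an additive gap of at least $\tau N^\epsilon$ at the new length $N$.

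Concretely, given an $\AddGapDist_\tau$ instance $(H,t)$ with $H\in\FF_2^{(n-k)\times n}$, I would set $N\defeq\lfloor n^{1/\epsilon}\rfloor$ and $m\defeq N-n\geq 0$ (note $N\geq n$ since $\epsilon\leq 1$, and $N$ is polynomial in $n$ because $\epsilon$ is a fixed constant of the problem, so $1/\epsilon$ is bounded) and output the pair $(H',t)$ with block-diagonal parity-check matrix
\[H'\defeq\begin{bmatrix}H & 0\\ 0 & I_m\end{bmatrix}\in\FF_2^{(N-k)\times N}.\]
The identity block forces the last $m$ coordinates of every codeword to vanish, so $C(H')=\{(c,0^m):c\in C(H)\}$, which has exactly the same minimum distance $d$ as $C(H)$. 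In the YES case the condition $d\leq t$ transfers directly, and in the NO case
\[d>t+\tau n\geq t+\tau N^\epsilon,\]
since $N^\epsilon\leq (n^{1/\epsilon})^\epsilon=n$ by construction. Taking the very same $\tau>0$ that witnesses hardness of $\AddGapDist_\tau$ therefore also witnesses hardness of $\AddGapDist_{\tau,\epsilon}$.

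There is really no serious obstacle here beyond bookkeeping; the proof is a one-shot Karp reduction. The three small points to verify are that the padding genuinely preserves the distance (immediate from the block-diagonal form and the fact that the $I_m$ block zeros out the new coordinates), that $N$ stays polynomial in $n$ (uses that $\epsilon$ is a constant), and that $\tau$ need not shrink with $\epsilon$ (the choice $N\leq n^{1/\epsilon}$ is exactly what makes $\tau n\geq \tau N^\epsilon$). The only mildly delicate conceptual point is recognising that the lemma should really be read as trading an $\mathsf{NP}$-hard gap at length $n$ for a smaller gap at length $N\gg n$, rather than as strengthening the original hardness result.
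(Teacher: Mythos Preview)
Your proof is correct and follows the same high-level strategy as the paper: reduce from the $\epsilon=1$ case by inflating the block length polynomially while keeping the minimum distance fixed, so that an additive gap of $\tau n$ becomes one of order $\tau N^\epsilon$. The only difference is the mechanism used to inflate the length. The paper tensors the input code with the full space, taking $C'=C\otimes\FF_2^{\lceil n^{1/\epsilon-1}\rceil}$, which has length $N=n\cdot\lceil n^{1/\epsilon-1}\rceil$ and distance $d$; this forces $n^{1/\epsilon}\le N\le 2n^{1/\epsilon}$ and hence costs a factor of $2^\epsilon$ in $\tau$. Your zero-padding (which is precisely the paper's Fact~\ref{code-elongation}) accomplishes the same thing more directly, and by choosing $N=\lfloor n^{1/\epsilon}\rfloor$ you get $N^\epsilon\le n$ on the nose, so no loss in $\tau$ is incurred. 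Both constructions are elementary and interchangeable here; your version is marginally cleaner.
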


\begin{proof}
To prove the lemma we will be using tensor codes, which are defined as follows:
\begin{Definition}
    The tensor product of two matrices $$A_{n\times m}:=\begin{bmatrix}
        a_{11} & a_{12} &. &. &. &a_{1m}\\
        a_{21} & a_{22} &. &. &. & a_{2m}\\
               & &.& & &\\
               & &.& & &\\
               & &.& & &\\
        a_{n1} & a_{n2} &. &. &. & a_{nm}\\
    \end{bmatrix}$$
    and 
    $$B_{p\times q}:=\begin{bmatrix}
        b_{11} & b_{12} &. &. &. &b_{1q}\\
        b_{21} & b_{22} &. &. &. & b_{2q}\\
               & &.& & &\\
               & &.& & &\\
               & &.& & &\\
        b_{p1} & b_{p2} &. &. &. & b_{pq}\\
    \end{bmatrix}$$
\end{Definition}
is defined as:
$$A\otimes B:=\begin{bmatrix}
        a_{11}B & a_{12}B &. &. &. &a_{1m}B\\
        a_{21}B & a_{22}B &. &. &. & a_{2m}B\\
               & &.& & &\\
               & &.& & &\\
               & &.& & &\\
        a_{n1}B & a_{n2}B &. &. &. & a_{nm}B\\
    \end{bmatrix}$$
\begin{Definition}
    For classical codes $C_{1},$ $C_{2}$ with parameters $[n_{1},k_{1},d_{1}]$ and $[n_{2},k_{2}, d_{2}]$ and generator matrices $G_{1},\; G_{2}$ respectively, the tensor product code, $C_{1}\otimes C_{2}$, is defined as the code corresponding to the generator matrix $G_{1}\otimes G_{2}$.
   
\end{Definition}

We will use the following fact about tensor product codes:
\begin{Fact}\label{Tensor-product}
The tensor product of classical codes, $C_{1}$ and $C_{2}$ with parameters $[n_{1},k_{1},d_{1}]$ and $[n_{2},k_{2},d_{2}]$ (respectively) is a classical code with parameters $[n_{1}n_{2},k_{1}k_{2},d_{1}d_{2}]$. Moreover, $(C_{1}\otimes C_{2})^{\perp}=C_{1}^{\perp}\otimes \mathbb{F}_{2}^{n_{2}}+\mathbb{F}_{2}^{n_{1}}\otimes C_{2}^{\perp}$ and has parameters $[n_{1}n_{2},n_{1}n_{2}-k_{1}k_{2},\min\{d_{2},d_{2}\}]$  (See also Lemma 3.3 in \cite{Szegedy2023tensordual}). 
\end{Fact}

  %  Consider a fixed $\epsilon\in(0,1]$. 
  We reduce from \textsc{AddGapDist}$_{\tau}$. Consider an instance $(H,t)$ for the \textsc{AddGapDist}$_{\tau}$ problem, where $\tau$ is as defined in \cite{dumer2003approx} (Theorem 32) such that \textsc{AddGapDist}$_{\tau}$ is $\NP$-hard. Let $C$ be the code of length $n$ corresponding to the parity-check matrix $H$ and let $d$ be its minimum distance.  Now, consider the tensor code $C^{'}=C\otimes \mathbb{F}_{2}^{\lceil n^{1/\epsilon-1}\rceil}$. By Fact \ref{Tensor-product}, it follows that the distance of $C^{'}$ is $d$ and its length is $N=n\cdot \lceil n^{1/\epsilon-1}\rceil$. We claim that
    $\min\{d,{N}^{\epsilon}\}=d$. Since $\min\{d,N^{\epsilon}\}\leq d$,  it suffices to show that $\min\{d,N^{\epsilon}\}\geq d$, which follows from: 
    \begin{align*}
\min\{d,N^{\epsilon}\}=&\min\{d,({n\cdot \lceil n^{1/\epsilon-1}\rceil})^{\epsilon}\}\\&\geq \min \{d,({n\cdot n^{1/\epsilon-1}})^{\epsilon}\}\\
&=\min\{d,n\}\\
&=d.
\end{align*}

 By definition $N=n\cdot \lceil n^{1/\epsilon-1}\rceil$. Note that as $\lceil n^{1/\epsilon}\rceil\leq 2n^{1/\epsilon}$, we have $n^{1/\epsilon}\leq N\leq 2n^{1/\epsilon}.$   
If $(H,t)$ is a YES instance for $\textsc{AddGapDist}_{\tau}$, then $d\leq t$, and as above it follows that $\min\{d,N^{\epsilon}\}\leq t$. If $(H,t)$ is a NO instance for $\textsc{AddGapDist}_{\tau}$, then $\min\{d,N^{\epsilon}\}=d>t+\tau n\geq t+\tau^{'} N^{\epsilon},$ where $\tau'=\tau/{2^{\epsilon}}.$
This establishes the lemma.
\end{proof}

We now reduce the  $\AddGapDist_{\tau, \epsilon}$ problem to the $\AddGapCSSDist_{\tau,\epsilon}$ problem, hence showing the hardness of the latter problem. For a given $\alpha\in(0,1]$, consider the repetition code of length $n^{\alpha}$.
Let $(H,t)$ be the input instance for $\AddGapDist_{\tau, \epsilon}$ and let $C$ be the code corresponding to parity-check matrix $H$ having length $n$ and distance $d$.
The hypergraph product code obtained from the code $C$ and the repetition code will be of length $n\cdot n^{\alpha}+(n-k)\cdot(n^{\alpha}-1)=n^{'}$ and distance $d^{'}=\min\{d,n^{\alpha}\}$. 

We claim that the above conversion of a classical code $C$ to a hypergraph product code is a deterministic reduction from the $\AddGapDist_{\tau, \epsilon}$ problem to the $\AddGapCSSDist_{\tau,\epsilon}$ problem. 

If $(H,t)$ is a YES instance of $\AddGapDist_{\tau, \epsilon}$, then $\min\{d,n^{\alpha}\}=d'\leq t$.
This implies that the corresponding hypergraph product code has distance $d'\leq t$.
If $(H,t)$ is a NO instance, then $d'>t+\tau n^{\alpha}\geq t+\tau' (n')^{\frac{\alpha}{1+\alpha}},$ where $\tau'=\tau/{2^{\frac{\alpha}{1+\alpha}}}.$
This follows from the observation $n^{1+\alpha}\leq n'\leq 2n^{1+\alpha}.$
Clearly, the gap between the YES instances and NO instances of $\AddGapCSSDist_{\tau,\epsilon}$ is maximized for $\alpha=1$, which is the case when the NO instance is $d'\geq t+\tau n^{1/2}$.
\qed

\section{Hardness for distances of graph states}
\subsection{Proof of Theorem \ref{Graph-State-Distance}}
\label{ss:GSD}

We now prove the hardness of computing the minimum graph state distance.
We reduce \textsc{MinDist} to \textsc{GraphDist} via  intermediate variants of  \textsc{MinDistDualDist}. Similarly, to prove hardness for the gap version of $\GraphMinDist$ we will need the corresponding gap versions \textsc{MultGapMinDistDualDist} and \textsc{AddGapDistDualDist} as defined next.
\noindent 
\vspace{2pt}
\begin{quote}\label{MinDualDist}
\underline{Minimum Distance Dual Distance (\textsc{MinDistDualDist})} 

 \textbf{Instance: }Binary matrix $H\in\mathbb{F}_{2}^{n-k\times n}$ and positive integer $t$.
 
 \textbf{Output:} If $\min\{d(C),\; d(C^{\perp})\}\leq t$ then YES else NO.

\end{quote}
\begin{quote}\label{MultGapMinDualDist}
\underline{Minimum Distance Dual Distance with Multiplicative Gap $\gamma$}

\underline{(\textsc{MultGapMinDistDualDist}$_{\gamma}$)} 

 \textbf{Instance: }Binary matrix $H\in\mathbb{F}_{2}^{n-k\times n}$ and positive integer $t$.
 
 \textbf{Output:} If $\min\{d(C),\; d(C^{\perp})\}\leq t$ then YES and NO if $>\gamma t$.

\end{quote}

\begin{quote}\label{AddGapMinDualDist}
\underline{Minimum Distance Dual Distance with Additive Gap $\tau n$}

\underline{(\textsc{AddGapMinDistDualDist})} 

 \textbf{Instance: }Binary matrix $H\in\mathbb{F}_{2}^{n-k\times n}$ and positive integer $t$.
 
 \textbf{Output:} If $\min\{d(C),\; d(C^{\perp})\}\leq t$ then YES and NO if $\min\{d(C),\; d(C^{\perp})\}>t+\tau n^{1/3}$.

\end{quote}

\begin{Lemma}
\label{lem:dualdist}
    $\emph{\MinDistDualDist}$ is $\mathsf{NP}$-hard.
    Furthermore, for every constant $\gamma\geq 1$, computing a \emph{\textsc{MultGap}\textsc{DistDualDist}} approximation is $\mathsf{NP}$-hard while there exists a $\tau\in(0,1)$ such that finding an additive approximation with error $\tau n^{1/3}$ is $\mathsf{NP}$-hard. 
\end{Lemma}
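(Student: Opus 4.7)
The plan is to reduce $\MinDist$ and its multiplicative and additive gap variants---all $\NP$-hard under deterministic Karp reduction---to $\MinDistDualDist$ and its corresponding variants. The key gadget is the family of ``balanced'' codes $\{D_m\}$ supplied by Lemma~\ref{lem:dual} (based on \cite{Hai2023infinte}), where for every $m$ the code $D_m$ has length $m$ with both primal and dual distance at least $\Delta(m) = \Omega(m^\alpha)$ for some positive constant $\alpha$. Given an input $(H_1,t)$ for $\MinDist$ with code $C_1$ of length $n$ and distance $d_1$, the reduction outputs $(H_1 \otimes H_{D_m}, t)$, where $H_{D_m}$ is a parity-check matrix of $D_m$. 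The resulting code $C'$ has length $nm$, and admits the alternative description $C' = C_1 \otimes \FF_2^m + \FF_2^n \otimes D_m$.

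The analysis rests on two properties of $C'$. First, using the identity $(A\otimes B)^\perp = A^\perp \otimes \FF_2^m + \FF_2^n \otimes B^\perp$, one obtains $C'^\perp = C_1^\perp \otimes D_m^\perp$, a pure tensor code whose distance equals $d(C_1^\perp) \cdot d(D_m^\perp) \geq d(D_m^\perp) \geq \Delta(m)$. This bound is independent of $d(C_1^\perp)$, which is the crucial point: it prevents the dual distance of $C'$ from collapsing to the possibly small $d(C_1^\perp)$. Second, $d(C') = \min(d_1, d(D_m))$. The upper bound is easy: the ``single-column'' codewords $c\otimes e_j$ (with $c \in C_1$ of minimum weight) and ``single-row'' codewords $e_i \otimes d$ (with $d \in D_m$ of minimum weight) both lie in $C'$. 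For the lower bound, take a nonzero $X \in C'$ and set $M \defeq H_1 X$; since $M H_{D_m}^T = 0$, every row of $M$ lies in $D_m$. If $M = 0$, then every column of $X$ lies in $C_1$ and so $wt_H(X) \geq d_1$. Otherwise some row of $M$ is a nonzero vector in $D_m$ of weight $\geq d(D_m)$; since that row is the $\FF_2$-sum of a subset of rows of $X$ (indexed by the support of a row of $H_1$), subadditivity of Hamming weight gives $wt_H(X) \geq d(D_m)$.

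For the exact and multiplicative-gap variants, it suffices to choose $m$ so that $\Delta(m)$ strictly exceeds the relevant threshold ($t$ for $\MinDist$, and $\gamma t$ for $\MultGapDistDualDist_\gamma$). Under this choice, $\min(d(C'),d(C'^\perp)) = d_1$, so the constructed instance is a YES instance of $\MinDistDualDist$ if and only if the original one is a YES instance of $\MinDist$. Since $m = \mathrm{poly}(n)$ suffices and $D_m$ is built deterministically in polynomial time, this is a Karp reduction. For the additive-gap variant, the reduction starts from $\AddGapDist_{\tau'}$ with gap $\tau' n$ (deterministic Karp-hard by \cite{Bhattiprolu2025PCPfree,Xuandi}); choosing $m = \Theta(n^2)$ gives $N \defeq nm = \Theta(n^3)$ and $N^{1/3} = \Theta(n)$, so both conditions $d_1 > t + \tau N^{1/3}$ and $\Delta(m) > t + \tau N^{1/3}$ can be satisfied simultaneously, provided the exponent $\alpha$ in Lemma~\ref{lem:dual} is at least $1/2$.

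The main obstacle is the lower bound $d(C') \geq \min(d_1, d(D_m))$; though ultimately clean, it hinges on the weight argument linking $X$ to $M = H_1 X$ via subadditivity. A secondary technical point is the calibration of $m$ for the additive variant, where the $n^{1/3}$ exponent in the output gap is determined by balancing the length blowup $N = nm$ against the exponent $\alpha$ supplied by Lemma~\ref{lem:dual}.
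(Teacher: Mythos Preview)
Your proposal is correct and follows essentially the same route as the paper: both tensor the dual of the input code with a member of the balanced family from Lemma~\ref{lem:dual} (your $C'^\perp = C_1^\perp \otimes D_m^\perp$ is exactly the paper's $\tilde{C} = C^\perp \otimes C'$ up to swapping $D_m$ with its dual), and both control $\min\{d,d^\perp\}$ via the tensor--sum distance identities, which the paper cites as Fact~\ref{Tensor-product} while you reprove the sum-code lower bound directly with the clean $M=H_1X$ argument. One small imprecision: taking $\Delta(m)>t$ alone does not guarantee $\min(d(C'),d(C'^\perp))=d_1$ in the NO case (since $d_1$ may exceed $d(D_m)$); either choose $\Delta(m)>n$ as the paper does with $N\ge n^2$, or simply note that $\Delta(m)>t$ (resp.\ $>\gamma t$) already suffices to preserve YES/NO instances without needing the exact equality.
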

\begin{proof}

For a given parity-check matrix $H$, let $C$ be the associated code to it with length $n$ and minimum distance denoted by $d(C)$. Now, consider a code $C^{'}$, that comes from a family of codes $\{C^{'}_{N}\}$ where $C^{'}_{N}$ is a binary code of length $N$ such that $\min\{d(C^{'}),\; d(C^{'})\}\geq \sqrt{N}$.

We consider the tensor code $\tilde{C}:=C^{\perp}\otimes C^{'}$ and its dual $\tilde{C}^{\perp}:=C\otimes \FF_{2}^{N}+\FF_{2}^{n}\otimes {C^{'}}^{\perp}$

One such family of codes was introduced in \cite{Hai2023infinte} (Theorem 2.9, Theorem 2.15). In the appendix, we provide another, and arguably simplified, construction of such an infinite family of codes with
\[\min\{d(C),\; d(C^{'})\}= \Omega(\sqrt{N}).\]
 
\begin{Lemma}[\cite{Hai2023infinte}]
\label{lem:dual}
    Let $m\geq 5$ be an odd integer with $m\equiv 5\mod{6}$. For every such $m$, there exists binary classical code $C_{m}$ with parameters $[2^{m}-1,\frac{2^{m+1}-1}{3},d\geq 2^{\frac{m-1}{2}}+1]$. Moreover, $C_{m}^{\perp}$ has parameters $[2^{m}-1,\frac{2^{m}-2}{3},d^{\perp}\geq 2^{\frac{m-1}{2}}+4]$. \qed
\end{Lemma}

While \cite{Hai2023infinte} does not provide any analysis of the time required to build their codes' parity-check matrices, one can check that this can be done efficiently.
(To do so, one must use the fact that the finite field $\mathbb{F}_{2^{m}}$ of order $2^{m}$ with $m={O(\log {n})}$ can be constructed in time $O(poly(n))$.)
We also use the observation that for every integer $n\geq 5$ there exists $m\equiv 5 \mod{6}$ for which $2^{m}\leq n\leq 2^{m+6}$. The aforementioned observation along with the following fact allows us to increase the length of the code while preserving the distance.

\begin{Fact}\label{code-elongation}
    For a linear code $C$ of parameters $[n,k,d]$ and parity-check matrix $H$, the code corresponding to the parity-check matrix \[H^{'}_{(n-k)\times n^{'}}=H\oplus I_{n^{'}-n}=\begin{bmatrix}
        H_{(n-k)\times n} & O_{(n-k)\times (n^{'}-n)}\\
        O_{(n^{'}-n)\times n}& I_{(n^{'}-n)\times (n^{'}-n)}
    \end{bmatrix},\]
    has parameters $[n^{'},k,d]$.
\end{Fact}

In the Appendix, we provide another construction of such an infinite family of classical error-correcting codes in Lemma \ref{Lemma:sqrtcodedualcode}.

Using Fact \ref{Tensor-product} about the tensor-product of codes, we have:
\[\min\{d(\tilde{C}),d(\tilde{C}^{\perp})\}=\min\{d(C^\perp)d(C'),d(C),d({C'}^{\perp})\}\leq d(C).\]

  Given that the length of code $C$ is $n$, and  $d(C'),\; d({C'}^{\perp})$  are both greater than or equal to $\sqrt{N}$, we get 
    \[\min\{d(\tilde{C}),d(\tilde{C}^{\perp})\}\geq \min\{d(C^\perp)\cdot n,d(C),n\}\geq d(C),\] for $N\geq n^{2}$.  Hence, $\min\{d(\tilde{C}),d(\tilde{C}^{\perp})\}=d(C)$. For the error in additive approximation, as we are embedding code of length $n$ in a space of length $n^{3}$, hence we get the additive approximation term with cubic error. 
\end{proof}

We now reduce this intermediate problem to \textsc{GraphDist}.

\begin{proof}[Proof of Theorem \ref{Graph-State-Distance}]

    For an input instance $(H,t)$ of \textsc{MinDistDualDist} consider the systematic form representation of $H$ i.e., $H=[I:P]$. Now, consider the following symmetric matrix $A_{P}$ corresponding to $H$, i.e.:
    \[A_{P}=\begin{bmatrix}
O_{n-k\times n-k} & P\\
       P^{T} & O_{k\times n-k} \\
\end{bmatrix}.\]

It can be verified that as $A_P$ is a symmetric matrix hence it can be treated as the adjacency matrix of a graph (infact a simple graph as the diagonal entries are all zero.)

As mentioned in the definition of minimum graph-state distance, the minimum graph state distance of $A_{P}$  is obtained by finding $x:=(a_{1},...,a_{n}|b_{1},...,b_{n})\neq 0$ with minimum $wt_{H}(a\lor b)$ such that

\begin{equation}[I:A_{P}]x^{T}=0\end{equation}

Now, the system of equations given in the above equation can be broken down into the following system of equations.

  \begin{equation}
  \bigg[I:P\bigg]\begin{bmatrix}
        a_{1}\\
        a_{2}\\
        \vdots\\
        a_{n-k}\\
        b_{n-k+1}\\
        \vdots\\
        b_{n}\\
    \end{bmatrix}
    =O_{n-k\times 1}
    \end{equation}

\begin{equation}\bigg[P^{T}:I\bigg]\begin{bmatrix}
        b_{1}\\
        \vdots\\
        b_{n-k}\\
        a_{n-k+1}\\
        \vdots\\
        a_{n}\\
    \end{bmatrix}
    =O_{k\times 1}
\end{equation}

Let $S_{1},\; S_{2}$ be the set of variables that correspond to the above two system of equations respectively. As the sets $S_{1}$ and $S_{2}$ are disjoint, the minimum graph state distance for $A_{P}$, is obtained by assigning one set of variables the value $0$, and for the remaining equation, we can find the non-zero solution with the least Hamming weight. This implies that the  minimum graph-state distance corresponding to the graph whose adjacency matrix is $A_{P}$ is:
\[d_{G}=\min\{d(C),d(C^{\perp})\},\]
and completes the proof of the hardness of \GraphMinDist. Moreover, the above reduction also shows hardness for $\MultGapGraphDist$  and \AddGapGraphDist. 
\end{proof}

As a corollary to Lemma \ref{lem:dualdist}, we obtain that computing the distance of classical linear codes having rate $1/2$ is $\NP$-complete.
\begin{Corollary}
    It is $\mathsf{NP}$-hard under Karp reductions to compute the distance of constant rate linear codes with rate $\in (0,1/2)$. Furthermore, it is $\mathsf{NP}$-hard under Karp reduction to compute both constant factor approximations and $n^{1/3}$ additive approximations to the distance of constant rate linear codes with rate $\in (0,1/2)$.  
\end{Corollary}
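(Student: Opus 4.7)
The plan is to give a direct Karp reduction from \MinDistDualDist{} (shown $\NP$-hard in Lemma \ref{lem:dualdist}) to the minimum distance problem restricted to codes of rate exactly $1/2$, and then extend to any constant rate in $(0,1/2)$ via zero-padding. The key construction is the direct product of a code with its dual, whose distance is exactly $\min\{d(C),d(C^\perp)\}$ and whose dimension is exactly half its length.

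Concretely, given an instance $(H,t)$ of $\MinDistDualDist$ with $H$ a full-rank $(n-k)\times n$ parity-check matrix for a code $C$, I would first compute a generator matrix $G$ of $C$ in polynomial time, noting that $G$ is then a parity-check matrix for $C^\perp$. Then the block-diagonal matrix
\[
\tilde H \;=\; \begin{pmatrix} H & O \\ O & G \end{pmatrix}
\]
of size $n\times 2n$ is a parity-check matrix of $\tilde C := C\times C^\perp = \{(a,b):a\in C,\;b\in C^\perp\}$, a code of length $2n$, dimension $n$, and therefore rate exactly $1/2$. Any nonzero codeword has weight $wt_H(a)+wt_H(b)$ with at least one component nonzero, so $d(\tilde C)=\min\{d(C),d(C^\perp)\}$. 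Hence $(\tilde H,t)$ is a YES instance of $\MinDist$ if and only if $(H,t)$ is a YES instance of $\MinDistDualDist$, which already handles the rate-$1/2$ case.

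To cover any constant rate $r\in(0,1/2)$, I would pad with $m=\lceil n(1-2r)/r\rceil$ extra coordinates constrained to equal zero, using the parity-check matrix
\[
\tilde H' \;=\; \begin{pmatrix} H & O & O \\ O & G & O \\ O & O & I_m \end{pmatrix}.
\]
The resulting code has length $N=2n+m=\Theta(n)$, dimension $n$, rate arbitrarily close to (and at most) the desired $r$, and the same distance $\min\{d(C),d(C^\perp)\}$ (this is exactly Fact \ref{code-elongation} applied to $\tilde C$). The multiplicative gap $\gamma$ carries over unchanged since neither the threshold $t$ nor the distance moves. For the additive gap, the original $\tau n^{1/3}$ becomes $\tau n^{1/3}=\tau(rN)^{1/3}\cdot(1+o(1)) = \tau'' N^{1/3}$ for a new constant $\tau''>0$, so we still get an $N^{1/3}$ additive gap on the reduced instance.

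There is essentially no hard step here; the whole proof is an accounting exercise once one observes that the direct product $C\times C^\perp$ realizes $\min\{d(C),d(C^\perp)\}$ as a bona fide minimum distance at rate $1/2$. The only point that needs care is tracking the constants in the additive-gap reduction so that the promise remains polynomial in $N^{1/3}$, which is straightforward because the reduction is length-preserving up to a multiplicative constant depending only on the target rate $r$.
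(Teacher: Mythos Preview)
Your proposal is correct and is essentially the same argument as the paper's. The paper also builds a rate-$1/2$ code of length $2n$ whose distance is $\min\{d(C),d(C^\perp)\}$ and then pads via Fact \ref{code-elongation} to reach any rate in $(0,1/2)$; the only cosmetic difference is that the paper writes the parity-check matrix in a permuted systematic form, whereas your block-diagonal $\tilde H=\begin{pmatrix}H&O\\O&G\end{pmatrix}$ makes the identification with the direct product $C\times C^\perp$ more transparent.
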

\begin{proof}
Consider a code $C$ with parameters $[n,k,d]$ and parity-check matrix $[I_{n-k}:P_{n-k\times k}]$.
Now consider the matrix:
\[\begin{bmatrix}
    I_{k} &0_{k \times n-k} &:& 0_{n-k \times k} & P_{n-k\times k}\\
    0_{n-k \times k} & I_{n-k}&: &P_{k\times n-k}^{T} &0_{k \times n-k}\\
\end{bmatrix}.\]
The above matrix can be treated as a parity-check matrix corresponding to a rate $1/2$ code. Computing the distance of the above code is computing a non-zero vector $(x,y)\in\mathbb{F}_{2}^{n}\oplus\mathbb{F}_{2}^{n}$ of least Hamming weight for which:
\[\begin{bmatrix}
    I_{k} & 0_{k \times n-k} &:& 0_{n-k \times k} & P_{n-k\times k}\\
    0_{n-k \times k} & I_{n-k}&: &P_{k\times n-k}^{T} & 0_{k \times n-k}\\
\end{bmatrix}\begin{bmatrix}
    x^T\\
    y^T
\end{bmatrix}=0_{n\times 1}.\]
As in the proof of Theorem \ref{Graph-State-Distance}, one sees the distance of this code is $\min\{d(C),d(C^{\perp})\}$. To prove the hardness of codes with having rate $\in (0,1/2)$, we take a code of rate equal to 1/2 and pad it with appropriate number of zeros. By Fact \ref{code-elongation}, this preserves the distance but decreases the rate.
\end{proof}

%%%%%%%%%%%%%%%%%%%%%%%%%%%%%%%%%%%%%%%%%%%%%%%%%%%%%%%%%%%%%%%%%%%%%%%%%%%%%%%%
\subsection{Proof of Theorem \ref{X-Graphstatedistance}}
\label{ss:XGSD}
Having showed that $\GraphMinDist$ is $\mathsf{NP}$-hard, one might wonder whether it is at least possible to compute in polynomial time the minimum number of $X$-errors that can be detected by the graph state.  We show that this problem and its gap versions are also $\mathsf{NP}$-hard under Karp reduction.

\begin{proof}[Proof of Theorem \ref{X-Graphstatedistance}]
We reduce the problem of $\MinDist$ to $X$-$\GraphMinDist$. In our reduction, we construct a self-orthogonal code $C'$ from the given input code $C$ while ensuring that $d(C')$ is proportional to $d(C)$. We explain this conversion of an arbitrary code $C$ to a self-orthogonal $C'$ in the following claim. 

\begin{claim}\label{claim:self-orthogonal}
    Let $C$ be a $[n,k,d]$ be a binary linear code. Consider the map $\pi:\mathbb{F}_{2}^{n}\rightarrow(\mathbb{F}_{2}^{n})^{2}$ defined as:
    \[\pi(a)=(a,a).\]
    Then the image of $C$, defined as:
    \[\pi(C)=\{\pi(c):c\in C\},\]
    is a self-orthogonal linear code with parameters $[2n,k,2d].$
\end{claim}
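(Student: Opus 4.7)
The plan is to verify the three code parameters and the self-orthogonality condition in turn, each of which reduces to a direct and elementary calculation. No deep machinery is required, so the proof is simply a matter of unpacking the definition of $\pi$ and the code properties.

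First, I would observe that $\pi$ is an $\mathbb{F}_2$-linear map by its very definition ($\pi(a+b)=(a+b,a+b)=(a,a)+(b,b)=\pi(a)+\pi(b)$), so $\pi(C)$ is automatically a linear subspace of $\mathbb{F}_2^{2n}$, giving length $2n$ immediately. For the dimension, I would note that $\pi$ is injective (if $\pi(a)=(a,a)=0$, then $a=0$), hence $\dim\pi(C) = \dim C = k$. For the distance, I would use that for any nonzero $c\in C$ the Hamming weight of $\pi(c) = (c,c)$ is exactly $2\cdot wt_H(c)$; minimizing over nonzero codewords then yields $d(\pi(C))=2d$.

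Second, for self-orthogonality, I would directly compute the $\mathbb{F}_2$ inner product of two arbitrary elements of $\pi(C)$. For any $a,b \in C$,
\[ \langle \pi(a), \pi(b) \rangle \;=\; \langle (a,a), (b,b) \rangle \;=\; \langle a,b\rangle + \langle a,b\rangle \;=\; 2\langle a,b\rangle \;\equiv\; 0 \pmod{2}. \]
This shows $\pi(C) \subseteq \pi(C)^\perp$, i.e.\ $\pi(C)$ is self-orthogonal, regardless of any self-orthogonality properties of the original code $C$.

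There is no real obstacle here. The only thing worth highlighting (and probably the reason the claim is stated at all) is the conceptual point that \emph{every} binary linear code $C$ can be embedded into a self-orthogonal code of twice the length while scaling the distance by the same factor of $2$, with only a modest loss in rate (from $k/n$ to $k/(2n)$). This is precisely what allows the reduction from $\MinDist$ to $X$-$\GraphMinDist$ to go through: one transforms an arbitrary input code into a self-orthogonal one whose distance still encodes the distance of the original, and the factor of $2$ scaling is harmless for preserving both multiplicative and (after renormalization) additive gaps.
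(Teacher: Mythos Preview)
Your proof is correct and follows essentially the same approach as the paper's: both verify linearity, injectivity of $\pi$ for the dimension, the doubling of Hamming weight for the distance, and then compute $\langle \pi(a),\pi(b)\rangle = 2\langle a,b\rangle = 0$ in $\mathbb{F}_2$ for self-orthogonality. Your added remark about why this construction is useful for the reduction is accurate and matches the paper's motivation.
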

\begin{proof}
    We first prove that the map $\pi(C)$ is a linear code of length $2n$.
    For any two vectors $\pi(c),\pi(c^{'})\in\pi(C)$ and $\lambda\; \mu\in\mathbb{F}_{2}$ we have that:
    \[\lambda \pi(c)+\mu\pi(c^{'})=\lambda(c,c)+\mu(c^{'},c^{'})=(\lambda c+\mu c^{'},\lambda c+\mu c^{'}).\]

    As $C$ is a linear subspace hence $\lambda c+\mu c^{'}\in C$, hence $(\lambda c +\mu c^{'},\lambda c+\mu c^{'})\in \pi(C)$. 

    We now prove that $dim(\pi(C))=dim(C)=k$.
    This follows from the observation that the map $\pi$ is an injective map. Indeed,
    $\pi(c)=(0,0)$ implies $(c,c)=(0,0)$, \emph{i.e.}\ $c=0$.

    The distance of the linear code $\pi(C)$ is calculated by observing that 
    \[
    \begin{aligned}
        \min_{\pi(c)\in\pi(C):\pi(c)\neq (0,0)}wt_{H}(\pi(c))&=\min_{c\in C:c\neq 0}\pi(c)\\
        &=\min_{c\in C:c\neq 0}2wt_{H}(c) =2\min_{c\in C:c\neq 0}wt_{H}(c)=2d_{H}(C).
    \end{aligned}
    \]

    Now, to prove that $\pi(C)$ is self-orthogonal we need to show that any pair of codewords $\pi(c),\; \pi(c^{'})\in\pi(C)$ is orthogonal to each other.

    Consider the inner-product:
    \[\langle\pi(c),\pi(c^{'})\rangle=\langle(c,c),(c^{'},c^{'})\rangle=\langle c,c'\rangle+\langle c,c'\rangle=0,\]
    
    where the last equality is due to the fact that we are in the field, $\mathbb{F}_{2}$ which has characteristic 2.
    This proves the self-orthogonality of $\pi(C)$.
\end{proof}
    
Now we describe our reduction from $\MinDist$ to $X$-$\MinDist$.

Let $(H,t)$ be the instance for $\MinDist$ and let $C$ be the corresponding linear code with parameters $[n,k,d]$.
By claim \ref{claim:self-orthogonal}$, \pi(C)\in(\mathbb{F}_{2}^{n})^{2}$ is a self-orthogonal linear code with parameters $[2n,k,2d]$.
The systematic form of the parity-check matrix for $\pi(C)$: \[[I_{2n-k\times 2n-k}:P_{2n-k\times k}].\] along with its generator matrix:
 \[[P^{T}_{k\times 2n-k}:I_{k}],\]
 can be obtained in polynomial time from $H$.

 Now, consider the matrix:
 \[A_{P}=\begin{bmatrix}
     I_{k} & P_{k\times 2n-k}\\
     P^{T}_{2n-k\times k} & I_{2n-k}\\
 \end{bmatrix}\]
 The matrix $A_{P}$ is symmetric and hence can be as a graph (though not a simple graph).
Now, consider vectors $(x,z)\in\mathbb{F}_{2}^{2n}\oplus\mathbb{F}_{2}^{2n}$ with $z=0$ such that
\begin{align*}
[I:A_{P}](z,x)^{T}&=0_{2n\times 1}\\ \intertext{or equivalently,}
A_{P}x^{T}&=0_{n\times 1}.\\ \intertext{Solving the above equation is equivalent to solving the following system of equations:}
[I_{k}:P_{2n-k\times k}]x^{T}&=0_{2n\times1}\\ \intertext{and,}
[P_{2n-k\times k}^{T}:I_{k}]x^{T}&=0_{2n\times 1}
\end{align*}
Thus, we are finding those vectors $x\in\mathbb{F}_{2}^{2n}$ for which both
$Hx^{T}=0_{2n-k\times 1}$ and
$Gx^{T}=0_{k\times 1}$.  In other words, we are finding vectors $x\in\mathbb{F}_{2}^{2n}$ for which
$x\in \pi(C)\cap \pi(C)^{\perp}$.
Since $\pi(C)$ is self-orthogonal, it follows that $x\in \pi(C)$.

This implies that we want to find a non-zero vector $x\in\mathbb{F}_{2}^{n}$ with the least Hamming weight with $x\in \pi(C)$. This is precisely the minimum Hamming distance of the classical code $\pi(C)$ which is $2d_{H}(C)$. This proves the hardness of $X$-$\GraphMinDist$ as well as $X$-\MultGapGraphDist$_\gamma$ under Karp reduction for $\gamma>1$. Further, as the length of $\pi(C)$ is $2n$, $X$-\AddGapGraphDist$_\tau$ is NP-hard for some $\tau\in(0,1)$.
\end{proof}

%%%%%%%%%%%%%%%%%%%%%%%%%%%%%%%%%%%%%%%%%%%%%%%%%%%%%%%%%%%%%%%%%%%%%%%%%%%%%%%%
\section{Discussion and outlook}
\label{sec:outlook}

\subsection{Controlling \texorpdfstring{$\epsilon$}{e} in \texorpdfstring{AddGapHGPDist$_{\tau,\epsilon}$}{AddGapHGPDist}}
\label{ss:control}

Let $\AddGapHGPDist_{\tau,\epsilon}$ denote the version of the problem $\AddGapCSSDist_{\tau,\epsilon}$ where the input CSS code is a hypergraph product code.

\begin{Proposition}\label{Prop:square-root barrier}
  For any fixed $\epsilon\in(1/2,1]$ and $\tau>0$, there does not exist a Karp reduction from $\emph{\MinDist}$ to $\emph{\AddGapHGPDist}_{\tau,\epsilon}$ unless $\mathsf{P}=\mathsf{NP}$.
\end{Proposition}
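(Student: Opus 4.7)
My plan is to exhibit an unconditional polynomial-time algorithm deciding $\AddGapHGPDist_{\tau,\epsilon}$ whenever $\epsilon>1/2$. Once such an algorithm is in hand, the proposition follows immediately: any Karp reduction from the $\NP$-hard problem $\MinDist$ to a problem in $\mathsf{P}$ would force $\mathsf{P}=\mathsf{NP}$.

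The engine of the algorithm is the square-root barrier that is built into the hypergraph product construction itself. By Equation \ref{eqn:TZ}, if $HGP(H_1,H_2)$ is built from classical codes $C_1,C_2$ with parameters $[n_i,k_i,d_i]$ and full-rank parity-check matrices, then the resulting CSS code has length $N=n_1n_2+(n_1-k_1)(n_2-k_2)\ge n_1n_2$ and quantum distance $d_Q=\min\{d_1,d_2\}$. Since $d_i\le n_i$ for any nontrivial linear code, I first record the a priori bound
\[
d_Q \;\le\; \min\{n_1,n_2\} \;\le\; \sqrt{n_1 n_2} \;\le\; \sqrt{N},
\]
which holds for every HGP instance of length $N$.

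Next, with $\epsilon\in(1/2,1]$ and $\tau>0$ fixed, I observe that because $\epsilon-1/2>0$ the ratio $\tau N^{\epsilon}/\sqrt{N}=\tau N^{\epsilon-1/2}$ tends to infinity with $N$, so there is a constant $N_0=N_0(\tau,\epsilon)$ beyond which $\tau N^\epsilon > \sqrt{N}$. For such $N$ and any $t\ge 0$, the NO promise $d_Q>t+\tau N^\epsilon$ would entail $d_Q>\sqrt{N}$, contradicting the displayed bound; hence the NO set of $\AddGapHGPDist_{\tau,\epsilon}$ contains no instance of length greater than $N_0$. A polynomial-time decider is then immediate: on inputs of length $N>N_0$ output YES, and on the finitely many remaining inputs (whose total bit-length is bounded by a constant depending only on $\tau$ and $\epsilon$) compute $d_Q$ by brute-force enumeration. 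Composing with any putative Karp reduction from $\MinDist$ would yield $\MinDist\in\mathsf{P}$, hence $\mathsf{P}=\mathsf{NP}$.

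I do not anticipate a serious technical obstacle. Morally, the promise gap $\tau N^\epsilon$ is strictly larger than the quantity $d_Q\le\sqrt{N}$ it is meant to approximate, so for large inputs there is nothing nontrivial to decide. The only point requiring care is that $N$ be polynomially bounded in the bit-length of the input description, so that the threshold ``$N\le N_0$'' genuinely caps the input size and brute force runs in constant time; this is immediate from the HGP construction, whose input matrices $H_1,H_2$ have $O(n_1^2+n_2^2)=O(N)$ entries. This square-root barrier is of course intrinsic to the HGP construction and does not by itself rule out hardness of $\AddGapCSSDist_{\tau,\epsilon}$ for general CSS codes when $\epsilon>1/2$, which is the open question highlighted in Section \ref{ss:control}.
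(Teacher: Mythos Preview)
Your argument rests on the same key inequality the paper uses---namely $d_Q\le\min\{n_1,n_2\}\le\sqrt{N}$ for a hypergraph product code with nontrivial constituents---but you package it differently and extract a stronger conclusion. The paper assumes a Karp reduction exists and then splits into cases according to whether $\min\{n_1,n_2\}$ stays bounded along the image of the reduction (brute force then decides $\MinDist$) or grows without bound (the NO condition $d_Q>t'+\tau N^\epsilon$ then contradicts $d_Q\le\sqrt{N}$). You instead show unconditionally that $\AddGapHGPDist_{\tau,\epsilon}\in\mathsf{P}$ whenever $\epsilon>1/2$, which sidesteps the paper's somewhat informal case split and in fact answers affirmatively the question the paper poses immediately after the proposition (whether this promise problem lies in $\mathsf{P}$ for some $\epsilon>1/2$).

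Two small points. First, your bound $d_Q\le\sqrt{N}$ presumes both classical codes are nontrivial; if $k_1=0$ or $k_2=0$ the HGP code has zero logical qubits and $d_Q=\infty$, so every such input is a NO instance and your ``output YES for large $N$'' rule is wrong there. This is fixed by a rank check before the main step. Second, the input bit-length is $O(n_1^2+n_2^2)=O(N^2)$ rather than $O(N)$; but since $n_1n_2\le N\le N_0$ forces $n_1,n_2\le N_0$, your claim that the small-$N$ inputs have bounded description length survives.
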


\begin{proof}
Suppose otherwise.
Then there is a deterministic polynomial-time algorithm that takes an input instance $(H,t)$ of $\MinDist$ (where $H$ is a binary parity-check matrix and $t$ is a non-negative integer) to an instance $(H_{1},H_{2},t')$ of $\AddGapHGPDist$ in a way that preserves the respective decision problems' output.
     Say, $H_1$ and $H_2$ are full-rank and corresponding code parameters are $[n_{1},k_{1},d_{1}]$ and $[n_{2},k_{2},d_{2}]$. Recall that the hypergraph product codes corresponding to parity-check matrices $H_{1}$ and $H_{2}$ with full rank and corresponding code parameters $[n_{1},k_{1},d_{1}]$ and $[n_{2},k_{2},d_{2}]$ have parameters $[[n_{1}n_{2}+(n_{1}-k_{1})(n_{2}-k_{2}),k_{1}k_{2},\min\{d_{1},\; d_{2}\}]].$ Due to the symmetry of the construction, we can assume that $d_{1}=\min\{d_{1},\; d_{2}\}$.

    We reiterate a subtle point:  $n_{1}$ and $n_{2}$ are functions of $n$ and hence really should be expressed as $n_{1}(n)$ and $n_{2}(n)$, but we will suppress this dependency in order to avoid overloading our notation.
    Now consider the following cases:
     \begin{enumerate}
         \item For all $(H_{1},H_{2},t')$, $\min\{n_{1},n_{2}\}=O(1)$:
         If we assume that $n_{1}=\min\{n_{1}, n_{2}\}=O(1)$ then one can just perform an efficient brute-force search to decide if there is a non-zero vector  $x$ with the properties needed. While if $n_{2}\leq n_{1},$ then using the fact that $\min\{d_{1},d_{2}\}\leq \min\{n_{1},n_{2}\}=O(1)$, one can check if there is a non-zero vector $x$ with $|x|\leq n_{2}=O(1)$ such that $H_{1}x^T=0.$ This would show that $\MinDist$ admits a poly-time algorithm, which contradicts $P\neq NP.$
            
         \item $n_{1}, n_{2}=\omega(1)$: 
         Consider a NO instance $(H,t)$ of \textsc{MinDist}. By the definition of Karp reduction, it follows  $(H_{1},H_{2},t^{'})$ is a NO instance of \textsc{AddGapHGPDist}$_{\tau,\epsilon}$.
         under the assumption $d_{1}=\min\{d_{1}\; d_{2}\}$,this gives the following set of inequalities:
         \[             d_{1}>t^{'}+\tau (n_{1}n_{2}+(n_{1}-k_{1})(n_{2}-k_{2}))^{\epsilon}\geq t'+\tau (n_1 n_2)^{\epsilon}.\] 
Assume $n_1\leq n_2$ and $d_{1}=\min\{d_{1},\; d_{2}\}$. The other case can be treated similarly. 
Then $d_{1}\geq t'+\tau {n_1}^{2\epsilon}$ and so $d_{1}/n_{1}\geq t'/n_{1}+n_{1}^{2\epsilon-1}$.
Since $d_{1}\leq n_1$, one obtains a contradiction for large enough $n_{1}$. 
\end{enumerate}
 \end{proof}

There are a number of very intriguing questions that Proposition \ref{Prop:square-root barrier} stimulates.
Do there exist an $1/2<\epsilon \le 1$ and a $0< \tau < 1$ for which $\textsc{AddGapHGPDist}_{\tau,\epsilon}$ is in $\mathsf{coNP}$?
Or $\mathsf{P}$?
Or maybe $\mathsf{BQP}$?
We discuss the most intriguing question in the next subsection.

\subsection{A new square-root barrier?}
On one hand, our technique for proving the hardness of $\AddGapCSSDist$ used hypergraph product codes, and resulted in an additive approximation with a square-root error term.
Moreover, Proposition \ref{Prop:square-root barrier} shows that there is, in some sense, no way to improve on this square-root term using hypergraph product codes.
On the other hand, the proof of \cite{kapshikar2023hardness} was based on rather different methods that exploited graphs with certain extremal properties, and yet arrived at the same kind of square-root error term.
While there is no analog of Proposition \ref{Prop:square-root barrier} found in \cite{kapshikar2023hardness}, it appears that the graphs they use in their reduction are essentially optimal \cite{furedi2013history}.
It is interesting that the same square-root term in the approximation appears in both reductions, and one inevitably wonders if there is a new kind of square-root barrier at play.

\subsection{Can we get a linear approximation for minimum graph state distance and linear codes with rate 1/2?}
Recall that we were able to prove hardness for approximating the minimum graph state distance and the hardness of approximating the distance of linear codes with rate equal to 1/2 with a cube root additive term. Can we improve it? Following our proof strategy, a natural way to get past the cube root barrier and get an improvement, say a square root approximation, would be to find an infinite family of efficiently computable codes $\{C_{m}\}$ for which $\min\{d(C_{m}),\; d(C_{m}^{\perp})\}=\Omega(n_{m}),$ where $n_{m}$ is the length of $C_{m}$. Whether such an infinite family of codes exists is a possible open direction. The possibility of achieving the hardness results without the need of such a family of codes is also a direction to pursue.

\subsection{Quantum ``nearest codeword problem''}
In the classical setting, the nearest codeword problem (NCP) has played a useful role in understanding the complexity of distances of codes.
This is because NCP is essentially a non-homogeneous version of the minimum distance problem.
So it is natural to wonder about quantum analogs of the question.
One subtlety is that the codewords of a quantum code are, strictly speaking, \emph{not} vectors over $\FF_2$, but rather vectors in some subspace of the Hilbert space $(\CC^2)^{\otimes n}$.
(We have avoided discussing this point much in this paper, but see Section \ref{Preliminaries}.)
From this perspective, the quantum nearest codeword problem should be a question about computing the orthogonal projections of states onto the codespace.
There are other, discrete variants one can imagine that happen inside the symplectic $\FF_2$ vector space that contains the stabilizers of a code.
These questions would not be about codewords \emph{per se}, but about Pauli error operators, e.g.\ given a Pauli error, what is the closest logical error (either in Hamming weight or symplectic weight)?

 \section{Acknowledgments} 
 We thank Xuandi Ren for helpful correspondence, and some anonymous reviewers for valuable feedback.

\appendix
\section{Appendix}
In this appendix we provide an infinite family of binary linear codes $\{C_{k}\}_{k\in\mathbb{N}}$ with $\min\{d(C_{k}),\; d(C_{k}^{\perp})\}=\Omega(\sqrt{n_{k}}),$ where $n_{k}$ is the length of the classical code $C_{k}$ and which can be constructed in time $poly(n_{k})$.

To do this we consider parity-check matrices of the form $[I_{n}:A_{G}]$ where $G$ is an adjacency matrix of the graph $G$ on $n$ vertices. A nice property of such codes is that the generator matrices of these codes will be of the form $[A_{G}^{T}:I_{n}]=[A_{G}:I_{n}]$, using the fact that the adjacency matrices are symmetric matrices. Hence if $d$ is the distance of the code corresponding to a parity-check matrix $[I_{n}:A_{G}]$, its dual will also have distance $d$. Hence, we only focus on the distance of the codes obtained from parity-check matrices $[I_{n}:A_{G}]$.

Formally, we prove the following result:
\begin{Lemma}\label{Lemma:sqrtcodedualcode}
    For every positive integer $n\geq 7$, there exists a $K_{2,2}-$free graph $G$ on $n$ vertices, with an adjacency matrix $A_{G}$, such that the code given by the parity-check matrix $[I_{n}:A_{G}]$ has distance $\Omega(\sqrt{n})$.
  \end{Lemma}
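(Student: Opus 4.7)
The plan is to take $G$ to be an Erd\H{o}s--R\'enyi polarity graph of the projective plane $PG(2,\FF_q)$ for a suitable prime power $q$. Its vertex set is the $q^2+q+1$ points of $PG(2,\FF_q)$, and two distinct points are adjacent iff one lies on the polar line of the other (with respect to the standard polarity $[a:b:c]\mapsto\{aX+bY+cZ=0\}$). This graph is well known to be $K_{2,2}$-free---two distinct lines of $PG(2,\FF_q)$ meet in at most one point---and its minimum degree is $q$ (absolute points have degree exactly $q$; all others have degree $q+1$). To cover every $n\ge 7$, I pick a prime power $q=\Theta(\sqrt{n})$ with $q^2+q+1\le n$, so that $r:=n-(q^2+q+1)=O(\sqrt{n})$; the ``extra'' $r$ vertices are attached carefully in a way I defer to the final paragraph.

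Given such a $G$, the nonzero codewords of the code $C$ with parity-check matrix $[I_n:A_G]$ have the form $(A_G b,b)$ for nonzero $b\in\FF_2^n$, and hence weight $wt_H(A_G b)+wt_H(b)$. Write $w:=wt_H(b)$, let $S$ be the support of $b$, and set $k_v:=|N(v)\cap S|$. Two standard identities drive the argument:
\[ \sum_v k_v \;=\; \sum_{i\in S}\deg(i) \;\ge\; w\cdot d_{\min}(G), \]
\[ \sum_v k_v(k_v-1) \;=\; 2\sum_{\{i,j\}\subseteq S}|N(i)\cap N(j)| \;\le\; w(w-1), \]
where the second inequality is exactly where $K_{2,2}$-freeness enters. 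Since $k_v-k_v(k_v-1)=k_v(2-k_v)\le \mathbf{1}[k_v=1]$ for every non-negative integer $k_v$, summing this pointwise bound and combining it with the two displayed identities yields
\[ |\{v:k_v=1\}| \;\ge\; w\,d_{\min}-w(w-1) \;=\; w(d_{\min}-w+1). \]
Every such vertex $v$ contributes to $wt_H(A_G b)$, so the codeword weight is at least $w(d_{\min}-w+2)$.

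A case split now finishes the distance bound: if $w\le d_{\min}/2$ then $w(d_{\min}-w+2)\ge d_{\min}/2$, while if $w>d_{\min}/2$ then already $wt_H(b)=w>d_{\min}/2$. Either way the codeword weight is $\Omega(d_{\min})=\Omega(\sqrt{n})$. Because $A_G$ is symmetric, a generator matrix of $C$ is $[A_G:I_n]$, which, up to a swap of the two $n$-column blocks, has the same form as $[I_n:A_G]$; hence the dual code $C^\perp$ satisfies an identical analysis and shares the same $\Omega(\sqrt{n})$ distance, as was already observed in the opening of the appendix.

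The main obstacle I anticipate is accommodating every $n\ge 7$ rather than only those $n$ of the form $q^2+q+1$. For the ``infinite family'' application invoked in Lemma \ref{lem:dualdist} this is already enough, but the stated lemma requires attaching the $r=O(\sqrt{n})$ extra vertices while preserving $K_{2,2}$-freeness and a minimum degree of $\Omega(\sqrt{n})$. One clean option is to instead take a vertex-disjoint union of two polarity graphs of nearby prime powers $q_1,q_2$ with $(q_1^2+q_1+1)+(q_2^2+q_2+1)=n$; another is to attach each extra vertex to a Sidon-type set of $\Theta(\sqrt{n})$ existing vertices, which by the defining property of Sidon sets guarantees that no two distinct vertices (new or old) share more than one common neighbor---precisely the $K_{2,2}$-free condition.
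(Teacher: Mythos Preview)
Your argument is correct and follows the paper's approach: both use the Erd\H{o}s--R\'enyi polarity graph of $PG(2,\FF_q)$ and exploit $K_{2,2}$-freeness (pairwise common neighborhoods of size $\le 1$) to show that the code with parity-check matrix $[I_n:A_G]$ has distance at least $\delta/2$, where $\delta$ is the minimum degree. Your global double-counting via $\sum_v k_v$ and $\sum_v k_v(k_v-1)$ is a tidy repackaging of the paper's iterative bound $|A_{i_1}+\cdots+A_{i_r}|\ge |A_{i_1}+\cdots+A_{i_{r-1}}|+\delta-2(r-1)$, yielding the same $\delta/2$ threshold; the paper, like you, only explicitly constructs the graph for $n=p^2+p+1$ and does not address the passage to arbitrary $n\ge 7$, so your extra discussion there already goes beyond what the paper provides (though your first suggested fix, a disjoint union of two polarity graphs summing exactly to $n$, does not cover every $n$).
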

\begin{proof}
We prove the above Lemma by first showing that if a graph $G$ is $K_{2,2}-$free and has ''large minimum degree'', then the classical error-correcting code corresponding to $[I_{n}:A_{G}]$ will have large distance.
\begin{claim}
    Let $G$ be a graph on $n$ vertices and let $\delta\stackrel{def}{=}\underset{v\in V}{\min}\; deg(v)$. If $G$ is $C_{4}$ (or likewise $K_{2,2}$) free, then the classical binary linear code, say $C$, with the parity-check matrix $[I:A_{G}]$ and its Euclidean dual, say $C^{\perp}$, have parameters $[2n,n,d\geq \delta/2]$.
\end{claim}

    Let $x=(a_{1},...,a_{n},b_{1},...,b_{n})\in\FF_{2}^{2n}$ be a non-zero codeword of $C$ with Hamming weight $d$ such that:
    \begin{align}\label{eq:distance_equality}
    [I_{n\times n}:A_{G}]x^{T}&=0_{n\times 1},\\
    \intertext{or equivalently,}
    A_{G}b^{T}&=a^{T}.
    \end{align}
    Now let us assume that $|x|<\delta/2$. This implies, $|a|,\; |b|<\delta/2$.

    Let $supp(b)\defeq \{i\in[n]: b_{i}\neq 0\}=\{i_{1},...,i_{l}\}$. This reformulates equation \ref{eq:distance_equality} to:
    $$\sum_{k\in [l]}A_{i_{k}}=a,$$
    where $A_{j}$ denotes the $j-$th column of $A$ treated as a binary vector of dimensions $n\times 1$.

Now, let us consider the left side of the above equation. Note that, $|A_{i_{1}}+A_{i_{2}}|=|A_{i_{1}}|+|A_{i_{2}}|-2\cdot|supp(A_{i_{1}})\cap supp(A_{i_{2}})|\geq 2\delta-2,$ owing to the fact that $|supp(A_{i_{1}})\cap supp(A_{i_{2}})|=|\{i: A_{i_{1},i}=1,\; A_{i_{2},i}=1 \}|\leq 1$ as $G$ is $K_{2,2}-$free.

For a general $r$, we have:
\begin{align*}
|A_{i_{1}}+A_{i_{2}}+...+A_{i_{r}}|&= |\sum_{k=1}^{r-1}A_{i_{k}}|+|A_{i_{r}}|-2\cdot |supp(A_{i_{1}}+...+A_{i_{r-1}})\cap supp(A_{i_{r}})|\\
&\geq  |\sum_{k=1}^{r-1}A_{i_{k}}|+|A_{i_{r}}|-2\cdot |\bigcup_{k=1}^{r-1} (supp(A_{i_{k}})\cap supp(A_{i_{r}}))|\\ 
&\geq |\sum_{k=1}^{r-1}A_{i_{k}}|+|A_{i_{r}}|-2\cdot \sum_{k=1}^{r-1}|supp(A_{i_{k}})\cap supp(A_{i_{r}})|,\;  \; \text{(Union Bound)} \\ 
&\geq |\sum_{k=1}^{r-1}A_{i_{k}}|+|A_{i_{r}}|-2\cdot (r-1)\; \;  \text{($G$ is $K_{2,2}$-free)} \\
&\geq |\sum_{k=1}^{r-1}A_{i_{k}}|+\delta-2\cdot (r-1)
\end{align*}

The above set of inequalities can be solved to obtain the inequality:
$$|\sum_{k=1}^{r}A_{i_{k}}|\geq \sum_{k=1}^{r}(\delta-2\cdot(k-1)).$$

Substituting $r=l$, we get:
$$|\sum_{k=1}^{l}A_{i_{k}}|\geq \sum_{k=1}^{l}(\delta-2\cdot(k-1))=\delta+\sum_{k=2}^{l}(\delta-2\cdot(k-1)).$$

Notice that $\delta-2\cdot (k-1)\geq 0,$ for $2\leq k \leq l$, where $l=|b|<\delta/2$.

This implies,
$$\delta\leq |\sum_{k=1}^{l}A_{i_{k}}|=|A_{G}b^{T}|=|a^{T}|<\delta/2,$$
hence giving a contradiction.

Now, the only missing piece is to ensure for every integer $n$ one can find a $K_{2,2}-$free graph $G$ such that $\delta=\Omega(\sqrt{n})$. We use the following construction of \cite{Erdos1966Algebraic} (see also Lemma 4 in \cite{kapshikar2023hardness}).

\begin{Lemma}(Lemma 4, \cite{kapshikar2023hardness})
    Let $n = p^{2}+p+1$ for some prime $p$. Then there is an 
 algorithm that, given $n$, runs in $poly(n)-$time and constructs a graph $G$ with $n$ vertices such that $G$ is $K_{2,2}-$free and each vertex of $G$ has degree $p$ or $p+1$.
\end{Lemma}

\end{proof}

\newcommand{\etalchar}[1]{$^{#1}$}

\end{document}